\newtheorem{lemma}{\bf Lemma}
\newtheorem{theorem}{\bf Theorem}
\newtheorem{proof}{Proof}[section]
\begin{document}

\title{Fast Neighbor Discovery for Wireless Ad Hoc Network with Successive Interference Cancellation}

\author{Zhiqing~Wei,~\IEEEmembership{Member,~IEEE,}
        Yueyue~Liang,
        Zeyang~Meng,~\IEEEmembership{Graduate~Student~Member,~IEEE,}\\
        Zhiyong~Feng,~\IEEEmembership{Senior Member,~IEEE,}
        Kaifeng~Han,~\IEEEmembership{Member,~IEEE,}
        Huici~Wu,~\IEEEmembership{Member,~IEEE}

\thanks{		
	Zhiqing Wei, Yueyue Liang, Zeyang Meng, and Zhiyong Feng are
with the Key Laboratory of Universal Wireless Communications, Ministry of Education,
School of Information and Communication Engineering,
Beijing University of Posts and Telecommunications, Beijing 100876, China
(e-mail: \{weizhiqing, liangyue, mengzeyang, fengzy\}@bupt.edu.cn).

Huici Wu is with the National Engineering Lab for Mobile Network Technologies, Beijing University of Posts and Telecommunications, Beijing 100876, China (email: dailywu@bupt.edu.cn).
	
Kaifeng Han is with the China Academy of Information
and Communications Technology, Beijing 100191, China (e-mail: hankaifeng@caict.ac.cn).
		
Correspondence authors: Kaifeng Han, Zhiyong Feng, Huici Wu.}}

\maketitle

\begin{abstract}
Neighbor discovery (ND) is a key step in wireless ad hoc network, which directly affects the efficiency of wireless networking.
Improving the speed of ND has always been the goal of ND algorithms.
The classical ND algorithms lose packets due to the collision of multiple packets,
which greatly affects the speed of the ND algorithms.
Traditional methods detect packet collision and implement retransmission when encountering packet loss.
However, they does not solve the packet collision problem and the performance improvement of ND algorithms is limited.
In this paper, the successive interference cancellation (SIC) technology is introduced into the ND algorithms to unpack multiple collision packets by distinguishing multiple packets in the power domain.
Besides, the multi-packet reception (MPR) is further applied to reduce the probability of packet collision by distinguishing multiple received packets, thus further improving the speed of ND algorithms.
Six ND algorithms, namely completely random algorithm (CRA), CRA based on SIC (CRA-SIC), CRA based on SIC and MPR (CRA-SIC-MPR), scan-based algorithm (SBA), SBA based on SIC (SBA-SIC), and SBA based on SIC and MPR (SBA-SIC-MPR),
are theoretically analyzed and verified by simulation.
The simulation results show that SIC and MPR reduce the ND time of SBA by 69.02\% and CRA by 66.03\% averagely.
\end{abstract}
\begin{IEEEkeywords}
Internet of Things, neighbor discovery, successive interference cancellation, multi-packet reception.
\end{IEEEkeywords}
\IEEEpeerreviewmaketitle

\IEEEpeerreviewmaketitle

\section{Introduction}
Recently, Internet of Things (IoT) has been widely applied in various fields, such as military \cite{5}, agriculture \cite{6}, medical treatment \cite{7}, industry \cite{10}, smart home \cite{8} and smart city \cite{9}.
As a distributed network, wireless ad hoc network is widely applied to support IoT with convenient and flexible networking ability \cite{1, 2, 3, 4}.
However, due to the large number of nodes, fast networking of wireless ad hoc network is challenging.

As the first step of networking, neighbor Discovery (ND) directly affects the efficiency of networking \cite{11}.
Recently, the research on ND algorithms is mainly focused on reducing energy consumption and improving ND speed \cite{c1}.
This paper focuses on the improvement of ND speed.
As a key performance indicator of the speed of ND algorithms,
the ND time is defined as the number of time slots required to discover all neighbors.
To reduce ND time, the existing methods are roughly divided into the following four categories.

    \subsubsection{Exploitation of prior information}
    Without prior information, the nodes explore neighbors blindly in all directions.
    To avoid repeating ND attempts in the directions without potential neighbors,
    various sensing methods are applied to obtain the prior information of node distribution,
    which avoid invalid ND in advance \cite{12, 13, 14, a1}.
    In \cite{12} and \cite{13}, radar is applied to provide the location information of neighbors as prior information.
    The ND algorithm proposed in \cite{14} is implemented in a dual-band system,
    where the prior neighbor information obtained in one frequency band assists ND in the other frequency band.
    In \cite{a1}, roadside units detect the position of vehicles and assist vehicles to discovery their neighbors.
    Besides, positioning technique is applied to obtain the distribution of neighbors,
    which improves the efficiency of ND \cite{15}.

    \subsubsection{Optimization of parameters}
    In \cite{b1} and \cite{b2},
    reinforcement learning is introduced to find the optimal ND strategy by interacting
    with the environment.
    In \cite{b3} and \cite{b4}, using a tool of machine learning, ND is formulated as a multi-armed bandit problem.
    Nodes utilize the result of past discovery attempts for learning.
    The methods in \cite{b5, b6, b7, b8} make ND adaptively based on number of neighbors \cite{b5},
    position accuracy \cite{b6}, channel randomness \cite{b7}, or collision probability \cite{b8}.

    \subsubsection{Reduction of packet collision probability}
    Packets that collide are dropped, resulting in longer ND time.
    To speed up ND, various schemes to reduce the packet collision rate are proposed.
    In \cite{16}, Liu \textit{et al.} introduce a third state, namely, the idle state.
    The nodes in idle state do not receive or transmit signals,
    which will reduce the probability of packet collisions and energy consumption.
    In \cite{a2,a3}, ND with multi-channel capability effectively reduces packet collisions in the network
    especially for dense networks. In \cite{17}, Zhao \textit{et al.} propose a 3-way multi-carrier asynchronous ND algorithm,
    which verifies that the collision probability of a multi-carrier system is lower than that of a single carrier system.
    Besides, the stop mechanism is applied as soon as the handshake is accomplished to reduce the collision probability.
    In \cite{a4}, Liu \textit{et al.} present a new anti-collision strategy for ND named Dual Channel Competition (DCC),
    in which two time slots are used as a time frame.
    The first time slot is called competition slot (CS) and the second one is called message slot (MS).
    In CS, nodes compete for the authority of transmission in MS, reducing packet collisions in MS.

    \subsubsection{Conflict detection and retransmission}
    In addition to the above-mentioned methods for avoiding packet collisions,
    a collision resolution scheme is proposed in \cite{18, a5}, which retransmits immediately when a collision occurs.
    If the energy detector detects that a collision occurs,
    the receiving node will send a collision acknowledgment to transmitting nodes.
    Then the receiving node will switch to the ``conflict resolution listening mode'',
    and transmitting nodes will switch to the ``conflict resolution retransmission mode'' until the data packets
    from at least two transmitting nodes are successfully received.

    The above methods have paid much attention to collision, which is a key factor that leads to the prolonged ND time.
    However, the above methods did not study the recovery of collided packets at the receiving node.
    To address the collision problem during ND,
    we apply the successive interference cancellation (SIC) technology to recover packets when collision occurs to reduce ND time.
    This scheme can be applied based on the proposed neighbor discovery algorithms.
    Although the introduction of SIC reduces the ND time, additional signal processing is required, which improves the complexity.
    Besides, the multi-packet reception (MPR) is applied to avoid packet collision,
    which further improves the performance of ND algorithms.
    The main contributions of this paper are as follows.

\begin{enumerate}
\item The ND algorithms with SIC are proposed to avoid packet collisions,
thereby reducing ND time and improving ND performance.
Simulation results show that compared with traditional scan-based neighbor discovery algorithm (SBA)
and completely random neighbor discovery algorithm (CRA),
SIC enabled SBA and CRA can be improved by an average of 34.03\% and 22.38\%.
\item The ND time of the proposed algorithm is theoretically analyzed.
We discover that there exists an upper bound of the number of packet collisions that SIC can handle.
Besides, the ability of MPR to avoid data packet collision mainly depends on the number of modulation methods.
Too few modulation methods still cannot avoid data packet collisions well,
and too many modulation methods result in wasted resources.
Therefore, the number of modulation methods needs to be selected.
\end{enumerate}

It is noted that part of this paper was our previous work as a conference paper \cite{wcsp}.
Compared with the conference paper,
this paper has the following improvements.

\begin{enumerate}
\item This paper not only applies SIC to achieve collided packets recovery,
but also applies MPR for collision avoidance, which comprehensively solves the problem of data packet collision.
This ability of SIC and MPR is further applied to the ND algorithms to obtain more optimal ND schemes.
\item This paper provides detailed design and performance analysis of the SIC enabled ND algorithms.
A description of the ND process and the derivation of the expectations of ND time are provided.
In addition, the practical issues in SIC, such as distance constraints of SIC and imperfect SIC,
are considered and analyzed in this paper.
\item The optimal parameters minimizing ND time,
the impact of imperfect SIC on ND time,
and the performance improvement of ND algorithms are obtained through simulation.
\end{enumerate}

Other sections are organized as follows.
Some related works on ND algorithms and SIC technology are introduced in Section II.
Section III and Section IV provide a detailed description of the network model and the process of ND, respectively.
Section V introduces SIC and MPR and provides a theoretical analysis of the ND algorithms, namely, CRA, SBA, CRA-SIC, SBA-SIC, CRA-SIC-MPR, and SBA-SIC-MPR.
The probability that a node successfully discovers a neighbor and the time expectation of completing the ND process in the ND algorithms are obtained.
In Section VI, simulation results and analysis for the ND algorithms are provided.
Finally, Section VII summarizes this paper.
Table I lists the main notations and their descriptions in this paper.

\begin{table}[htbp]
    \newcommand{\tabincell}[2]{\begin{tabular}{@{}#1@{}}#2\end{tabular}}
	\centering
	\caption{Main notations}
	\begin{tabular}{cl}
		\toprule  
		Notation&Description \\
		\midrule  
        $\rm{CRA}$&Completely random algorithm\\[1.5ex]
        $\rm{SBA}$&Scan-based algorithm\\[1.5ex]
        $\rm{SIC}$&Successive interference cancellation\\[1.5ex]
        $\rm{MPR}$&Multi-packet reception\\[1.5ex]
		$\theta $&Beam width of the scanning beam \\[1.5ex]
        $a$&Length of nodes distribution \\[1.5ex]
        $b$&Width of node distribution \\[1.5ex]
        $\lambda $&Distribution density of nodes \\[1.5ex]
        $r$&Communication radius of nodes \\[1.5ex]
        $\bar N$&The average number of neighbors of a node \\[1.5ex]
        $K$&The average number of neighbors of a node in a beam \\[1.5ex]
        ${P_t}$&Transmit probability \\[1.5ex]
        $D(t)$&\tabincell{l}{The number of neighbors that discovered the node in the\\past $t$ time slots in a beam}\\[2.5ex]
        $P_{{\rm{A}} \to {\rm{B}}}^{x}\left( t \right)$&\tabincell{l}{The probability that node A finds its unknown neighbor\\B in the $t$-th time slot based on the $x$ algorithm, where \\
        $\begin{array}{l}
        x \in \{ \rm{CRA,}\;\rm{SBA,}\;\rm{CRA\_SIC,}\;\rm{SBA\_SIC,}\;\\
        \;\;\;\;\;\;\;\;\rm{CRA\_SIC\_MPR,}\;\rm{SBA\_SIC\_MPR}\}
        \end{array}$}\\[4.5ex]
        $M$&The number of data packets collision \\[1.5ex]
        $\beta $&SIR or SINR threshold for successful unpacking \\[1.5ex]
        ${S_i}$&\tabincell{l}{The power of the $i$-th data collided packet}\\[1.5ex]
        ${d_i}$&\tabincell{l}{The distance between the transmitting node of the $i$-th\\collided packet and the receiving node} \\[2.5ex]
        ${\lambda _0}$&Free space wavelength \\[1.5ex]
        ${P_T}$&Transmit power of nodes \\[1.5ex]
        ${G_T}$&Transmission gain of nodes\\[1.5ex]
        ${G_R}$&Reception gain of nodes\\[1.5ex]
        ${n_0}$&\tabincell{l}{The maximum number of data packets that can be\\unpacked by perfect SIC at the same time} \\[2.5ex]
        $P(Q,M)$&\tabincell{l}{The probability that the first $Q$ of the $M$ collided data\\ packets can be successfully unpacked} \\[2.5ex]
        $\bar P(1,M)$&\tabincell{l}{The expected probability that one of $M$ collided packets\\ can be successfully unpacked} \\[2.5ex]
        ${N_0}$&\tabincell{l}{The power of additive white Gaussian noise in the\\environment}\\[2.5ex]
        $\xi $&Residual coefficient of interference cancellation \\[1.5ex]
        ${N_i}$&\tabincell{l}{The power of additive white Gaussian noise caused by\\imperfect cancellation} \\[2.5ex]
        $N$&\tabincell{l}{The sum power of noise at the receiver and noise caused\\by imperfect cancellation}\\[2.5ex]
        ${C_i}$&The power of residual of interference cancellation \\[1.5ex]
        $E\left( {{T_{all}}} \right)$&\tabincell{l}{The number of time slots expected by node A to find all\\neighbors in all beams} \\[2.5ex]
        $h$&The number of modulation methods \\
		\bottomrule  
	\end{tabular}
\end{table}

\section{Related Works}

    \subsection{Neighbor discovery}
    According to the handshake rules,
    ND algorithms consist of the algorithms with one-way handshake, two-way handshake, and three-way handshake
    according to different handshake rules.
    For one-way handshake, the node only needs to mark that a neighbor has been discovered when the signal is correctly received.
    One-way handshake is mostly used in the scenario with omni-directional antennas with the characteristics of
    omni-directional transmission and reception \cite{19}.
    Unlike omni-directional antennas, when nodes use directional antennas to transmit and receive signals with neighbors,
    beam alignment is required \cite{20}.
    Consequently, in the scenario with directional antennas,
    nodes adopt two-way handshake \cite{21} or three-way handshake \cite{17,22}.
    Since this paper adopts directional antennas,
    we adopt the two-way handshake mechanism in ND algorithms.

    According to the channel access mechanism, ND algorithms consist of CRA and SBA \cite{20}.
    The difference between CRA and SBA is the scan order of beams.
    For SBA, all nodes have the same beam scanning order sequence.
    Subsequently, it is necessary to select transmission or reception according to the predefined beam on each time slot.
    For CRA, the beam scanning order of nodes in each time slot is completely random.
    Nodes will randomly select one beam for transmission or reception with equal probability in a time slot.

    As the first step of networking, ND directly affects the performance of routing protocols.
    Therefore, the research of ND is also carried out in some routing protocols.
    In \cite{b9}, Oubbati \textit{et al.} comprehensively considered the balanced energy consumption,
    the link breakage prediction, and the connectivity degree to minimize the number of path failures,
    decrease the packet losses, and increase the lifetime of the network.
    In \cite{b10}, a routing protocol based on random network coding and clustering is
    designed to reduce the number of hops in routing protocol.

    Table II provides a comparison among the previously described ND algorithms with our ND algorithms.

\begin{table*}[htbp]
    \newcommand{\tabincell}[2]{\begin{tabular}{@{}#1@{}}#2\end{tabular}}
	\centering  
	\caption{Comparison of the related ND algorithms}  

	\begin{tabular}{|c|c|c|c|c|c|c|c|}
		\hline  
		& & & & & & & \\[-6pt]
		&Our algorithms &RCI-SBA \cite{12} &Rns \cite{13} &GSIM-ND \cite{a1} &MC-NDA \cite{17} &3D-ND \cite{21} &HAS-3-way \cite{22}\\  
		\hline
		& & & & & & & \\[-6pt]  
		Handshake rules &two-way &two-way &three-way &two-way &three-way &two-way &three-way \\
\hline
		& & & & & & & \\[-6pt]  
        Channel access mechanism &CRA\verb|\|SBA &SBA &CRA &CRA &CRA &SBA &CRA \\
        \hline
		& & & & & & & \\[-6pt]  
        Node size &Dense &Sparse &Sparse &Dense &Dense &Sparse &Dense \\
        \hline
        & & & & & & & \\[-6pt]  
		Time synchronization &\checkmark &\checkmark &\checkmark &\checkmark & &\checkmark &\checkmark \\
        \hline
		& & & & & & & \\[-6pt]  
        Prior information & &\checkmark &\checkmark &\checkmark & & & \\
        \hline
		& & & & & & & \\[-6pt]  
        Reduce energy consumption & &\checkmark & & & &\checkmark &\checkmark \\
        \hline
		& & & & & & & \\[-6pt]  
        Multi-packet reception &\checkmark & & &\checkmark &\checkmark & & \\
        \hline
		& & & & & & & \\[-6pt]  
        Packet collision resolution &\checkmark & & & & & & \\
		\hline
	\end{tabular}
\end{table*}

    \subsection{Successive interference cancellation}

    SIC techniques enable multi-packet reception,
    which detects each data packet through an iterative method.
    SIC was first known as an application in Code Division Multiple Access (CDMA) \cite{23},
    which is further utilized in Non Orthogonal Multiple Access (NOMA) \cite{24}.
    This paper applies SIC to enhance the efficiency of ND algorithms.

    The execution process of SIC when receiving signals consists of the following four steps.

\begin{enumerate}[Step 1:]
\item Preprocess and sort signals: When receiving multiple data packets, the receiver sends them to the matched filter for pre-processing and sorts the pre-processed signals according to the signal strength from strong to weak.
\item Unpack the first data packet: According to the sorting results, the first data packet is obtained through the procedures of detection, judgment, and reconstruction.
\item Remove the interference: The first signal that has been unpacked is subtracted from received signals to eliminate the interference of this signal on subsequent signals.
\item Repeat iteration: Repeat the above steps to recover the remaining signals.
\end{enumerate}

\section{Network Model}

    In the ND algorithms with directional antenna,
    it is assumed that the transmission power and communication range of all nodes are the same \cite{19,12}.
    The nodes in the entire network are located on the same horizontal plane and uniformly distributed with a density of $\lambda $ nodes per unit area \cite{c2}.
    The nodes use the directional antenna.
    As shown in Fig. \ref{3}, the nodes are covered by the beams of $\frac{{2\pi }}{\theta }$ sector antenna elements in all directions,
    where $\theta \;(0 < \theta  < 2\pi )$ represents the beam width of each sector antenna element.
    By selecting different sector antenna elements, the nodes can switch between sectors.
    The nodes are divided by sector antenna elements in space dimension and by time slots in time dimension.
    This paper assumes that all nodes are synchronized, and similar assumptions are made in \cite{b5,20,c3}.
    Synchronization could be achieved by means in \cite{c4,c5}.
    In addition, detailed descriptions of the ND algorithms are as follows.

\begin{enumerate}
\item The ND algorithms adopt two-way handshaking mechanism.
Thus, each time slot is divided into two equal mini-slots, as shown in Fig. \ref{4}.
\item The communication between nodes is half-duplex.
Nodes transmit with probability ${P_t}$ or receive with probability $1 - {P_t}$ at the first mini-slot.
\item To reduce the probability of data packet collisions,
all nodes follow ``stop as soon as handshake accomplished'' mechanism,
which means that nodes will no longer reply to neighbors that have been discovered.
\item In the single packet reception algorithm, if the received signals come from two or more neighbors,
the node will directly drop the data packets because the packet collision occurs in the algorithm without SIC.
However, in the algorithm with SIC, nodes apply SIC technology to unpack collided data packets.
If the difference in the power of the received signals meets the unpacking threshold condition of SIC,
the data packets can be successfully recovered at the same time.
The difference in the power of the received signals depends on the location of the nodes in the collision,
which may not meet the unpacking threshold of SIC.
As a result, unpacking fails.
\item In the ND algorithms that support multi-packet reception,
the node can choose different modulation methods when transmitting the signal.
If a node receives multiple packets with different modulation methods simultaneously,
the node can unpack these packets successfully.
Otherwise, the node still tries to unpack the packets by SIC.
\end{enumerate}

\begin{figure}[!t]
\centering
\begin{minipage}[t]{0.22\textwidth}
\centering
\includegraphics[width=4cm]{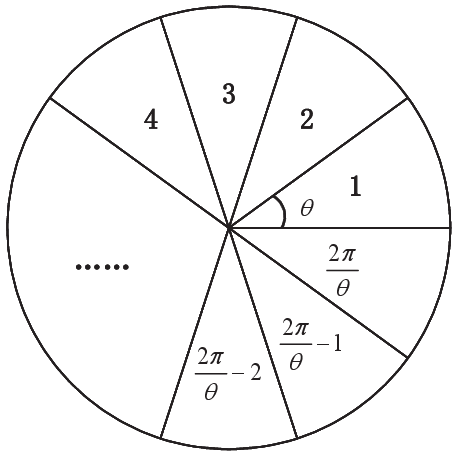}
\caption{The beams of idealized sector antennas.}
\label{3}
\end{minipage}
\hspace{4mm}
\begin{minipage}[t]{0.23\textwidth}
\centering
\includegraphics[width=4cm]{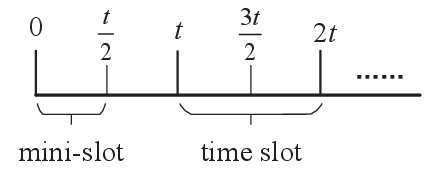}
\caption{Division of time slots.}
\label{4}
\end{minipage}
\end{figure}

\section{The Process of Neighbor Discovery}
        \subsection{CRA and SBA neighborhood discovery}
         For CRA, the beam scanning order of nodes in each time slot is completely random.
         Nodes randomly select one in $\frac{{2\pi }}{\theta }$ beams for transmission or reception with the probability of $\frac{\theta }{{2\pi }}$.
         The difference between SBA and CRA is only in scanning order of beams.
         For SBA, all nodes have the same beam scanning order sequence defined at the beginning of neighbor discovery.
         As shown in Fig. \ref{3}, nodes scan beams counterclockwise starting from beam 1.
         Subsequently, it is necessary for each node to select the beam for operation according to the predefined sequence on each time slot.
         In addition, in SBA, to ensure the transmitting and receiving directions of the nodes are opposite,
         the nodes in transmitting state transmit in the predefined beam, while the nodes in receiving state receive in the opposite direction of the predefined beam.
         As shown in Fig. \ref{5a}, CRA and SBA can successfully discover neighbors only
         when the data packets do not collide. Their specific processes are as follows.

        \begin{enumerate}[Step 1:]
        \item In the first mini-slot, node A sends a signal in the selected direction with probability ${P_t}$, and node B waits to receive a signal in the selected direction with probability $1 - {P_t}$.
        \item Node B receives the signal sent by node A and determines whether the signal is sent from a neighbor that has been discovered.
        \item If the received signal comes from a neighbor discovered by node B, node B does not reply and waits for the end of this time slot. Otherwise, Step 4 is executed.
        \item In the second mini-slot, node B switches to the transmitting state and replies with an acknowledge signal in the direction of the received signal.
        \item Node A switches to the receiving state and receives the acknowledge signal from node B, thus completing the ND process between nodes A and B.
        \end{enumerate}

        For the situation in Fig. \ref{5b}, in the first mini-slot,
        both node A and node C send signals to node B,
        and data packet collision will occur at node B.
        Since SBA and CRA cannot simultaneously unpack multiple data packets,
        Node B will discard the received data packets.
        In this situation, these three nodes cannot successfully discover neighbors.

        \subsection{SIC based neighbor discovery}
        Compared with traditional CRA and SBA, the improvement of the SIC based ND algorithms,
        i.e. CRA based on SIC (CRA-SIC) and SBA based on SIC (SBA-SIC),
        is that the multiple collided data packets can be successfully unpacked,
        such that the efficiency of ND is improved.
        As illustrated in Fig. \ref{5b}, both node A and node C are sending data packets to node B.
        The distance between node A and node B is not equal to the distance between node C and node B,
        such that the power of the data packets of node A and node C received by node B is different.
        Therefore, node B takes advantage of this power difference to unpack the data packets of node A and node C by SIC.
        The specific process is as follows.

        \begin{enumerate}[Step 1:]
        \item In the first mini-slot, node B receives two data packets from both node A and node C.
        \item Node B sorts the two received signals according to the power level.
        \item Node B regards the signal with low power (sent by node C) as interference and unpacks the signal with high power (sent by node A). It is required that the power of these two data packets received by node B has a significant power difference greater than the threshold of SIC (more details are in Section V).
        \item When node B successfully unpacks the signal from node A, it subtracts this signal from the received signals to cancel the interference to the signal transmitted by node C.
        \item Repeat Step 3 to complete the unpacking of the data packet from node C.
        \item In the second mini-slot, node B switches to the transmitting state and replies with an acknowledge signal in the direction of the received signal.
        \item Nodes A and C receive the acknowledge signal from node B. In this time slot, the ND process of both A-B and C-B is completed.
        \end{enumerate}

        \subsection{SIC and MPR based neighbor discovery}
        The improvement of SIC and MPR based ND algorithms, i.e. CRA based on SIC and MPR (CRA-SIC-MPR) and SBA based on SIC and MPR (SBA-SIC-MPR),
        is that in the case of multiple packets received simultaneously,
        the SIC and MPR based ND algorithms can successfully unpack the received packets if different modulation methods are selected for these packets.
        If the received packets have the same modulation, the receiving node can unpack the packets using the SIC method,
        and its specific process is as follows.

        \begin{enumerate}[Step 1:]
        \item In the first mini-slot, multiple nodes sending signals to node B randomly select modulation methods for the packets to be sent.
        \item When receiving signals from multiple nodes, node B determines whether the received signals adopt the same modulation method and unpacks the signals of different modulation methods.
        \item Node B determines whether there are multiple signals of the same modulation method and applies SIC to unpack them.
        \item In the second mini-slot, node B selects any modulation method to process the acknowledge signal and replies in the direction of the received signal.
        \item Multiple transmitting nodes receive the acknowledge signal from node B. In this time slot, the ND process between multiple nodes and node B is completed.
        \end{enumerate}

\begin{figure}[!t]
\centering
\subfigure[Successful communication between node A and node B]{
\includegraphics[width=3.7cm]{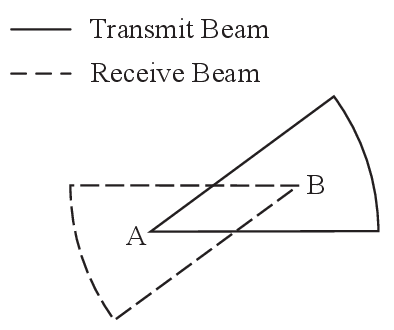}
\label{5a}
}
\hspace{4mm}
\subfigure[Collision between node A and node C]{
\includegraphics[width=3.5cm]{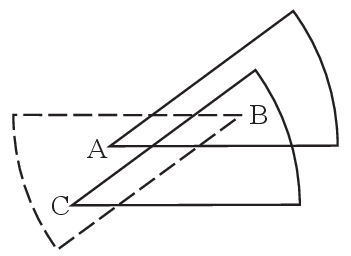}
\label{5b}
}
\caption{Successful communication and collision between neighbors.}
\end{figure}

\section{Analysis of Proposed Neighbor Discovery Algorithms}

    This section theoretically analyze the six ND algorithms:
    CRA, SBA, CRA-SIC, SBA-SIC, CRA-SIC-MPR and SBA-SIC-MPR.
    The probability of successful ND and the expectation of the time slots
    required to complete ND with the six ND algorithms are derived.

    \subsection{CRA and SBA}
    The discovery probabilities of CRA and SBA algorithms are derived in this section,
    which is different from \cite{a6} on three aspects.
    1) The average number of neighbors of the node is derived and the boundary is taken into account.
    2) When deriving the discovery probability, \cite{a6} only considers bidirectional links,
    while this paper also considers the case that only node A discovers node B, as shown in Fig. \ref{5aa}.
    3) When deriving the probability that node A receives the reply from node B without interference,
    \cite{a6} only considers the case that other nodes are in the transmitting state and
    ignores the case that other nodes are in the receiving state but fail to receive.
    In this section, all cases are considered.

    \begin{lemma}
    For the whole network, the average number of neighbors of a node is (proof can be found in Appendix A)
    \begin{equation}
    \label{eq_L1}
    \bar N = \frac{{3\pi \lambda {r^4} - 8\left( {a + b} \right)\lambda {r^3} + 6\pi \lambda ab{r^2}}}{{6ab}}
    \end{equation}
    and the average number of neighbors of a node in a beam is
    \begin{equation}
    K = \frac{\theta }{{2\pi }}\bar N,
    \end{equation}
    where $a$ and $b$ are the length and width of the area of node distribution respectively,
    $r$ is the communication radius of nodes,
    $\lambda$ is the distribution density of nodes and $\theta $ is the beam width of the scanning beam.
    \end{lemma}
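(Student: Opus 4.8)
\emph{Proof plan.} The plan is to express $\bar N$ as the expected number of other nodes lying within the communication radius $r$ of a typical node, averaged over the uniform position of that node in the deployment rectangle $R=[0,a]\times[0,b]$. Since the nodes form a homogeneous process of intensity $\lambda$ on $R$, a node located at $p$ has on average $\lambda\,\bigl|D(p,r)\cap R\bigr|$ neighbours, where $D(p,r)$ is the disk of radius $r$ about $p$; averaging over $p\in R$,
\[
\bar N=\frac{\lambda}{ab}\int_{R}\bigl|D(p,r)\cap R\bigr|\,dp=\frac{\lambda}{ab}\int_{R}\int_{R}\mathbf{1}\{\|p-q\|\le r\}\,dq\,dp .
\]
So the first step is simply to set up and justify this double integral over $R\times R$.

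Second, I would change variables $q=p+(s,t)$ and integrate over $p$ first. For a fixed displacement $(s,t)$, the set of admissible $p$ (those with $p\in R$ and $p+(s,t)\in R$) is the intersection of $R$ with its translate, of area $(a-|s|)_+\,(b-|t|)_+$. In the regime of interest the communication radius is small relative to the deployment area, so $r\le\min(a,b)$, and since $|s|,|t|\le r$ on the disk the truncations are inactive. Hence
\[
\bar N=\frac{\lambda}{ab}\int_{s^{2}+t^{2}\le r^{2}}(a-|s|)(b-|t|)\,ds\,dt .
\]

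Third, expand $(a-|s|)(b-|t|)=ab-b|s|-a|t|+|s|\,|t|$ and evaluate the four resulting integrals over the disk of radius $r$ in polar coordinates; each factors into a radial integral $\int_{0}^{r}\rho^{k}\,d\rho$ times an elementary angular integral of $1$, $|\cos\phi|$, or $|\cos\phi\sin\phi|$. The constant term contributes the $r^{2}$ piece ($\lambda\pi r^{2}$), the two terms linear in $|s|,|t|$ contribute the $r^{3}$ piece proportional to the perimeter $2(a+b)$, and the $|s||t|$ term contributes the $r^{4}$ piece. Collecting everything over the common denominator $6ab$ gives the stated polynomial. The companion identity $K=\tfrac{\theta}{2\pi}\bar N$ is then immediate: the $2\pi/\theta$ sector beams around a node partition its neighbours, and averaging over the beam orientation (equivalently, over the node's position in $R$) each beam receives an equal $\theta/2\pi$ share of the expected count.

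The main obstacle, and the only real content, is the boundary correction: the naive estimate $\bar N\approx\lambda\pi r^{2}$ overcounts because a node near an edge or corner of $R$ sees only a truncated disk. The translate-overlap substitution above is precisely what converts this geometric complication into the clean factor $(a-|s|)(b-|t|)$ and cleanly separates the interior, edge, and corner contributions; an alternative but more tedious route would be to split $R$ into an interior region, four width-$r$ edge strips, and four $r\times r$ corner squares, compute the disk--rectangle intersection area (sums of circular-segment areas, with a corner lens on the corner squares) on each piece, and integrate. I would also flag the standing hypothesis $r\le\min(a,b)$: without it, extra truncation terms appear and the answer is no longer a polynomial in $r$.
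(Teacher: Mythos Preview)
Your approach is different from the paper's and, in fact, more rigorous---but it will not reproduce the stated formula. Carrying out your set-covariance computation gives
\[
\bar N=\frac{\lambda}{ab}\Bigl(\pi ab\,r^{2}-\tfrac{4}{3}(a+b)\,r^{3}+\tfrac{1}{2}\,r^{4}\Bigr)
=\frac{3\lambda r^{4}-8(a+b)\lambda r^{3}+6\pi\lambda ab\,r^{2}}{6ab},
\]
whereas the lemma has $3\pi\lambda r^{4}$ in the numerator, not $3\lambda r^{4}$. Your disk integrals ($\int|s|=4r^{3}/3$, $\int|s||t|=r^{4}/2$) are correct; the mismatch lies in the paper's derivation, not yours.

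The paper does essentially what you call the ``more tedious route'': it splits $R$ into the inner $(a-2r)\times(b-2r)$ rectangle and the surrounding width-$r$ strip, but it does \emph{not} treat the four corner squares separately. On the strip it parameterises points by their distance to the nearest edge and subtracts a single circular segment from $\pi r^{2}$ to obtain $N_{out}$. For a node in a corner region two edges clip the disk, so a second segment should also be removed; the single-segment formula therefore overcounts there. Since the corner area is $O(r^{2})$ and the per-point overcount is $O(r^{2})$, the error lands exactly at order $r^{4}$---which is why your $r^{2}$ and $r^{3}$ coefficients agree with theirs while the $r^{4}$ coefficients differ ($1/2$ versus $\pi/2$).

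So your plan is sound and cleaner---the translate-overlap substitution is precisely the right device and avoids all casework---but be aware that it yields $3\lambda r^{4}$, not $3\pi\lambda r^{4}$. If the goal is to reproduce the lemma verbatim you would have to adopt the paper's one-edge approximation on the strip; if the goal is the exact average, keep your method and flag the discrepancy.
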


    \begin{theorem}
    The probability of node A discovering the neighboring node B at $t$-th time slot is
    \begin{equation}
    \normalsize
    \begin{array}{l}
    P_{{\rm{A}} \to {\rm{B}}}^{CRA}\left( t \right) = \frac{\theta }{{2\pi }}{P_t} \cdot \frac{\theta }{{2\pi }}\left( {1 - {P_t}} \right) \cdot {\left( {1 - \frac{\theta }{{2\pi }}{P_t}} \right)^{K - 1}}\\
    \cdot \left\{ {1 + {{\left[ {1 - \frac{\theta }{{2\pi }}\left( {1 - {P_t}} \right) \cdot {{\left( {1 - \frac{\theta }{{2\pi }}{P_t}} \right)}^{K - 1}}} \right]}^{K - 1 - D\left( {t - 1} \right)}}} \right\}
    \end{array}
    \end{equation}
    with CRA and
    \begin{equation}
    \begin{array}{l}
    P_{{\rm{A}} \to {\rm{B}}}^{SBA}\left( t \right) = {P_t} \cdot \left( {1 - {P_t}} \right) \cdot {\left( {1 - {P_t}} \right)^{K - 1}}\\
    \cdot \left\{ {1 + {{\left[ {1 - \left( {1 - {P_t}} \right) \cdot {{\left( {1 - {P_t}} \right)}^{K - 1}}} \right]}^{K - 1 - D\left( {t - 1} \right)}}} \right\}
    \end{array}
    \end{equation}
     with SBA, where ${P_t}$ is the transmit probability and $D(t-1)$ is the number of neighbors that have
     discovered node A in the past $t - 1$ time slots in the beam where node B is located.
    \end{theorem}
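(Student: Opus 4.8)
The plan is to condition on the action chosen by node A in the first mini-slot of slot $t$: either A is receiving, or A is transmitting. These two scenarios are disjoint and, up to the negligible events ruled out by the half-duplex and idealized-beam model, exhaust the ways A can obtain a packet from B during slot $t$, so $P_{\mathrm{A}\to\mathrm{B}}^{x}(t)$ is the sum of the two corresponding probabilities; it then remains to evaluate each one under the independence and uniform-density assumptions of the network model, using Lemma~1 for the per-beam neighbor count $K$.

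First I would treat the case where A is receiving in the first mini-slot. For A to discover B here, B must transmit a packet that reaches A without collision: B must be transmitting with its active beam aimed at A, A must be listening on the reciprocal beam, and none of the other $K-1$ neighbors of A sharing that beam may transmit toward A. Treating the nodes' choices as independent and approximating each node's neighbor count by $K$, these requirements have probabilities $\frac{\theta}{2\pi}P_t$, $\frac{\theta}{2\pi}(1-P_t)$ and $\left(1-\frac{\theta}{2\pi}P_t\right)^{K-1}$ for CRA -- and $P_t$, $1-P_t$, $(1-P_t)^{K-1}$ for SBA, since the common scan sequence makes the beam choices deterministic at the slot that aligns A with B. Their product is exactly the prefix in the statement, and nothing further is required because A already holds B's packet; this yields the ``$1$'' inside the braces, and it is precisely the ``only A discovers B'' contribution flagged in the introduction (whether or not B later hears A's acknowledgment is irrelevant for $P_{\mathrm{A}\to\mathrm{B}}$).

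Next I would treat the case where A transmits in the first mini-slot, so A beams a packet to B and B acknowledges in the second mini-slot. By the symmetry of the transmit/receive roles and of the collision condition at B, the event ``B decodes A's packet without collision'' has the same probability as the prefix above; given it, B (which has not previously discovered A) replies along the received direction, and A -- now receiving -- hears the acknowledgment iff no other neighbor of A transmits toward A in the second mini-slot. A neighbor $C$ of A lying in B's beam does so only if it has itself cleanly received A's packet in the first mini-slot, of probability $\frac{\theta}{2\pi}(1-P_t)\left(1-\frac{\theta}{2\pi}P_t\right)^{K-1}$ (resp. $(1-P_t)(1-P_t)^{K-1}$ for SBA), and only if $C$ has not yet discovered A; the ``stop as soon as handshake accomplished'' rule removes the $D(t-1)$ neighbors that already discovered A, leaving $K-1-D(t-1)$ potential interferers. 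Independence then gives the bracketed quantity raised to the power $K-1-D(t-1)$, and multiplying by the prefix gives the second summand. Adding the two disjoint cases produces exactly the product form in the statement for CRA, and replacing the $\frac{\theta}{2\pi}$-weighted beam probabilities by $1$ gives the SBA version.

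I expect the main obstacle to be not any single computation but the clean justification of the approximations that legitimize the factorization: the independence of the numerous transmit/receive/beam events across nodes, the mean-field replacement of each node's neighbor count by $K$ (already averaged, boundary effects included, in Lemma~1), and above all the geometric claim that the only packet a neighbor $C$ of A can receive and then reply toward A with is A's own beamed packet -- so that the second-mini-slot interference depends solely on how many neighbors of A ``overheard'' A and remain unacknowledged. A secondary, fussier point is keeping the set bookkeeping consistent: B excluded from the $K-1$ others, the $D(t-1)$ already-discovered neighbors forming a subset of those $K-1$, and B itself assumed still undiscovered by A so that the two cases stay exhaustive.
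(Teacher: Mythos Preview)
Your proposal is correct and follows essentially the same approach as the paper's proof: the same two-case decomposition (A receiving in the first mini-slot versus A transmitting), the same factorization of each case into independent transmit/receive/beam-alignment/no-collision events, and the same treatment of second-mini-slot interference via the $K-1-D(t-1)$ neighbors that have not yet discovered A. Your identification of the implicit approximations (mean-field neighbor count, independence across nodes, and the geometric fact that any reply toward A from a neighbor $C$ in B's beam must stem from $C$ cleanly decoding A's own packet) is, if anything, more explicit than the paper's, but the argument is the same.
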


    \begin{proof}
    As the ND process of a node in each beam is independent and approximately the same,
    the discovery probability of a neighboring node in a beam within a time slot represents
    the ability of the ND algorithm to find the neighbors in all beams.
    In the two-way ND algorithm with directional antennas,
    node A discovers its unknown neighbor B,
    which consists of two situations, as shown in  Fig. \ref{5aa} and Fig. \ref{5bb}.

\begin{figure}[!t]
\centering
\subfigure[Case 1]{
\includegraphics[width=3.5cm]{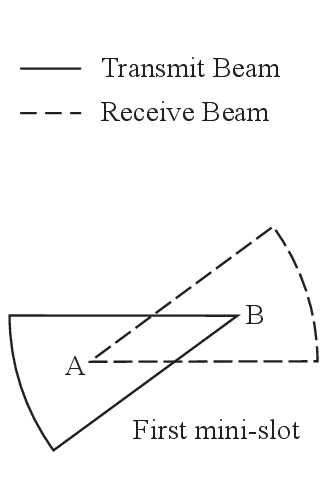}
\label{5aa}
}
\subfigure[Case 2]{
\includegraphics[width=3.5cm]{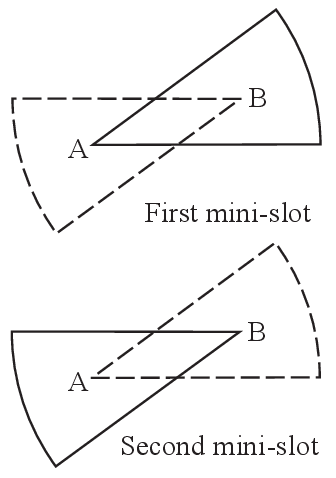}
\label{5bb}
}
\caption{Two cases of node A discovering its neighbor B.}
\end{figure}

    1) Case 1: In the first mini-slot, if node A is in receiving state,
    node B is in transmitting state,
    and other neighbors of node A do not interfere with the reception of node A,
    node A receives the signal of node B with probability
    \begin{equation}
    P_R^{CRA}{\rm{ = }}\frac{\theta }{{2\pi }}\left( {1 - {P_t}} \right) \cdot \frac{\theta }{{2\pi }}{P_t} \cdot {\left( {1 - \frac{\theta }{{2\pi }}{P_t}} \right)^{K - 1}}
    \end{equation}
    using CRA and with probability
    \begin{equation}
    P_R^{SBA}{\rm{ = }}\left( {1 - {P_t}} \right) \cdot {P_t} \cdot {\left( {1 - {P_t}} \right)^{K - 1}}
    \end{equation}
    using SBA.

    2) Case 2: In the first mini-slot, if node A is in transmitting state,
    node B is in receiving state, and other neighbors of node B do not interfere with the reception of node B,
    node B successfully receives the signal of node A with probability
    \begin{equation}
    P_{T1}^{CRA}{\rm{ = }}\frac{\theta }{{2\pi }}{P_t} \cdot \frac{\theta }{{2\pi }}\left( {1 - {P_t}} \right) \cdot {\left( {1 - \frac{\theta }{{2\pi }}{P_t}} \right)^{K - 1}}
    \end{equation}
    using CRA and with probability
    \begin{equation}
    P_{T1}^{SBA} = {P_t} \cdot \left( {1 - {P_t}} \right) \cdot {\left( {1 - {P_t}} \right)^{K - 1}}
    \end{equation}
    using SBA.

    In the second mini-slot, except of node B,
    the $K-1-D(t-1)$ neighbors of node A that have not been discovered in the first $t-1$ time slots
    will reply to node A with probability
    \begin{equation}
    P_{reply}^{CRA} = \frac{\theta }{{2\pi }}\left( {1 - {P_t}} \right) \cdot {\left( {1 - \frac{\theta }{{2\pi }}{P_t}} \right)^{K - 1}}
    \end{equation}
    using CRA and with probability
    \begin{equation}
    P_{reply}^{SBA} = \left( {1 - {P_t}} \right) \cdot {\left( {1 - {P_t}} \right)^{K - 1}}
    \end{equation}
    using SBA.

    If these $K-1-D(t-1)$ neighbors do not reply to node A,
    node A can successfully receive the reply from node B with probability
    \begin{equation}
    P_{T2}^{CRA/SBA}\left( t \right) = {\left[ {1 - P_{reply}^{CRA/SBA}} \right]^{K - 1 - D\left( {t - 1} \right)}},
    \end{equation}
    where $P_x^{CRA/SBA} \in \{ P_x^{CRA},\;P_x^{SBA}\} $ and $x \in \{ T2,\;reply,\;{\rm{A}} \to {\rm{B}},\;R,\;T1\} $.

    Through the above analysis, node A discovers node B with probability
    \begin{equation}
    P_{{\rm{A}} \to {\rm{B}}}^{CRA/SBA}\left( t \right) = P_R^{CRA/SBA} + P_{T1}^{CRA/SBA}P_{T2}^{CRA/SBA}\left( t \right).
    \end{equation}
    \end{proof}

    \subsection{SIC based ND algorithms}
        \textbf{\textit{1) Perfect SIC}}
        \begin{lemma}
        \label{lemma_2}
        The maximum number of simultaneous unpacking data packets at the receiver
        by perfect SIC is (proof can be found in Appendix B)
        \begin{equation}
        \label{eq_17}
        {n_0} = \left\lfloor {2 + {{\log }_{1 + \beta }}\frac{{16{\pi ^2}{r^2}}}{{\lambda _0^2\beta }}} \right\rfloor ,
        \end{equation}
        where $\beta$ is the signal-to-interference ratio (SIR) threshold for successful unpacking and ${\lambda _0}$ is free space wavelength.
        \end{lemma}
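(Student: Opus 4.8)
The plan is to combine the free-space propagation model with the SIR requirement that perfect SIC must satisfy at every cancellation step, and then ask how many packets the available spread of received powers can accommodate.

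First I would fix notation: suppose $M$ packets collide at the receiver, carried by transmitters at distances $d_1\le d_2\le\cdots\le d_M\le r$, so that by the Friis formula the received powers $S_i=P_TG_TG_R\big(\tfrac{\lambda_0}{4\pi d_i}\big)^2$ are automatically sorted as $S_1\ge S_2\ge\cdots\ge S_M$. Two physical bounds on these powers are needed. No received power can exceed the transmitted power (with $G_T=G_R=1$), so $S_1\le P_T$, equivalently $d_1\ge\lambda_0/(4\pi)$; and every colliding transmitter lies within the communication radius, so $d_M\le r$ gives $S_M\ge P_T\lambda_0^2/(16\pi^2 r^2)$. Hence the strongest-to-weakest power ratio of any colliding group satisfies $S_1/S_M\le 16\pi^2 r^2/\lambda_0^2$.

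Next I would write down what perfect SIC needs. Decoding in order of decreasing power and cancelling perfectly, the $k$-th packet is recovered against only the residual interference of the packets not yet decoded, so for $k=1,\dots,M-1$ we need $S_k\ge\beta\sum_{i=k+1}^{M}S_i$ (the last packet is interference-free). Writing $T_k=\sum_{i\ge k}S_i$, this says $T_k\ge(1+\beta)T_{k+1}$, which telescopes to $T_2\ge(1+\beta)^{M-2}S_M$; combined with $S_1\ge\beta T_2$ it gives the necessary condition $S_1\ge\beta(1+\beta)^{M-2}S_M$. I would then check that this is also sufficient by exhibiting the geometric profile $S_{M-j}=\beta(1+\beta)^{j-1}S_M$, which meets every SIC inequality with equality and is still monotone, so $\beta(1+\beta)^{M-2}$ is exactly the worst-case power spread demanded by $M$ simultaneously decodable packets.

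Finally I would combine the two facts: decodability of $M$ colliding packets requires the minimal admissible power spread $\beta(1+\beta)^{M-2}$ to be no larger than the maximal physical spread $16\pi^2 r^2/\lambda_0^2$, i.e. $M\le 2+\log_{1+\beta}\frac{16\pi^2 r^2}{\lambda_0^2\beta}$, and the geometric profile shows this bound is attained; hence the largest number of packets perfect SIC can unpack at once is $n_0=\big\lfloor 2+\log_{1+\beta}\frac{16\pi^2 r^2}{\lambda_0^2\beta}\big\rfloor$. The main obstacle I expect is not the algebra but pinning down the endpoints of the dynamic range consistently with the paper's model — in particular justifying the cap $S_1\le P_T$ on the Friis law at short range — together with the mild edge case $\beta<1$, where the naive geometric profile would push the tail packets below $P_T\lambda_0^2/(16\pi^2 r^2)$ and one must flatten the bottom of the profile instead; this perturbs the count only by a bounded additive constant and leaves the stated expression intact, but it should be acknowledged.
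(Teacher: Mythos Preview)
Your proposal is correct and follows essentially the same route as the paper: both arguments bound the admissible received-power spread by $16\pi^2 r^2/\lambda_0^2$ via the Friis law with $d\in[\lambda_0/(4\pi),r]$, identify the geometric profile $S_{M-j}=\beta(1+\beta)^{j-1}S_M$ as the extremal decodable configuration, and read off the bound on $M$. The paper carries this out in the distance variables (setting $d_M=r$, equating each SIR to $\beta$, and solving the recursion for $d_{M-n}$), whereas you work directly with the powers and make the necessity step explicit through the telescoping $T_k\ge(1+\beta)T_{k+1}$; this is a cleaner presentation but not a different method, and your caveat about $\beta<1$ is a refinement the paper simply omits.
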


        According to Lemma \ref{lemma_2}, ${n_0}$ is jointly determined by the unpacking SIR threshold $\beta $ and the communication radius $r$.
        When the number of collided data packets $M$ is greater than the maximum number of simultaneous unpacking data packets ${n_0}$ at the receiver by SIC,
        the probability that data packets are unpacked by SIC successfully is extremely low,
        which is ignored in the subsequent analysis.

        \begin{lemma}
        Suppose that the number of collided data packets at the receiver is $M$ ($M \le {n_0}$),
        and the distances between their transmitting nodes and the receiving node are ${d_1},\;{d_2},\; \ldots ,\;{d_M}$ from small to large.
        The expected probability that one of the $M$ collided packets can be successfully unpacked is (proof can be found in Appendix C)
        \begin{equation}
        \label{eq_L3}
        \bar P(M) = \frac{1}{M}\sum\limits_{i = 1}^M {\prod\limits_{j = 1}^i {\frac{i}{{\beta {r^2}\sum\limits_{k = 0}^{M - j - 1} {{{\left( {\frac{1}{{{d_{M - k}}}}} \right)}^2}} }}} } .
        \end{equation}
        \end{lemma}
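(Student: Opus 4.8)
The plan is to reduce the claim to two pieces: (i) the sequential ``onion-peeling'' structure of perfect SIC, and (ii) an elementary geometric-probability computation for the node distances. First I would relabel the $M$ colliding transmitters so that their distances to the receiver satisfy $d_1\le d_2\le\cdots\le d_M$; by the free-space path-loss (Friis) model assumed in the paper the corresponding received powers then satisfy $S_1\ge S_2\ge\cdots\ge S_M$ with $S_m\propto 1/d_m^2$, so perfect SIC attempts to recover the packets exactly in the order $1,2,\dots$. Hence the $i$-th closest packet is recovered precisely when the $i$ strongest packets are all recovered, an event whose probability is $P(i,M)$ in the notation of the paper. Because the packet of interest (the one sent by the tracked neighbour) is, by the symmetry of the uniform node distribution, equally likely to be any of the $M$ packets in the distance ordering, averaging over its rank gives $\bar P(M)=\frac{1}{M}\sum_{i=1}^{M}P(i,M)$. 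The statement therefore follows once I establish $P(i,M)=\prod_{j=1}^{i}\frac{i}{\beta r^2\sum_{k=0}^{M-j-1}(1/d_{M-k})^2}$.

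To obtain $P(i,M)$ I would unfold the SIC iteration step by step. At step $j$ (for $1\le j\le i$) the first $j-1$ packets have already been cancelled exactly, so packet $j$ must be decoded against the residual interference of packets $j+1,\dots,M$; with the index substitution $m=M-k$ this interference power is proportional to $\sum_{m=j+1}^{M}1/d_m^2=\sum_{k=0}^{M-j-1}(1/d_{M-k})^2$. The step succeeds iff the SIR clears the threshold, i.e. iff $1/d_j^2\ge\beta\sum_{m=j+1}^{M}1/d_m^2$, equivalently $d_j^2\le\big(\beta\sum_{m=j+1}^{M}1/d_m^2\big)^{-1}$ (for $j=M$ the interference sum is empty, so the step is automatic). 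Writing $P(i,M)$ as the probability of the intersection of these $i$ events and stripping off one conditional factor at a time is what produces the product over $j$.

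For the individual factors I would invoke the network model: conditioned on lying within the communication disk of radius $r$, a transmitter is uniformly distributed there, so a single distance obeys $P(d^2\le x)=x/r^2$ for $0\le x\le r^2$. Applying this to the constraint on $d_j$ yields the factor $\big(\beta r^2\sum_{m=j+1}^{M}1/d_m^2\big)^{-1}$; the extra factor $i$ in the numerator comes from accounting for the ordered (order-statistic) structure of $d_1\le\cdots\le d_M$ restricted to the $i$ packets being peeled---equivalently, from the number of ways the $i$ roles can be assigned among those nodes---and each factor is understood to be capped at $1$ whenever the expression would exceed it (in particular at the last step of the $i=M$ term). Substituting the resulting $P(i,M)$ into $\bar P(M)=\frac{1}{M}\sum_{i=1}^{M}P(i,M)$ produces the claimed identity. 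I expect the main obstacle to be exactly this last reduction, namely turning the intersection of the $i$ SIR events into a clean product: the events are dependent, since every $d_m$ appears as the ``signal'' in its own step and as ``interference'' in all earlier steps, so one must either argue carefully about the conditional law of the order statistic $d_j$ given $d_{j+1},\dots,d_M$ and the ordering constraint, or---more in keeping with the level of rigour in the paper---adopt an independence/uniformity approximation that lets the combinatorial count be absorbed into the factor $i$. A minor accompanying step is to note, using Lemma~\ref{lemma_2}, that realisations with $M>n_0$ are negligibly likely, so only the finite sum over $i\le M$ is needed.
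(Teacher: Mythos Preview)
Your skeleton---convert each SIC step into the distance constraint $d_j^2\le\bigl(\beta\sum_{m>j}1/d_m^2\bigr)^{-1}$, turn that into the area ratio $P_j=\bigl(\beta r^2\sum_{m>j}1/d_m^2\bigr)^{-1}$ via the uniform distribution on the disk of radius $r$, take $P_M=1$ for the empty-interference step, and set $P(Q,M)=\prod_{j\le Q}P_j$---is exactly what Appendix~C does. The divergence is in where the factor $i$ lives. In the paper the per-step probabilities $P_j$ carry \emph{no} combinatorial factor; the $i$ enters only at the last line, as a weight in the outer average $\bar P(M)=\tfrac{1}{M}\sum_{i=1}^{M} i\,P(i,M)$, not as a multiplier inside $P(i,M)$. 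The appearance of $i$ under the product sign in the displayed lemma is a notational slip in the statement---the derivation in Appendix~C makes clear the intended summand is $i\prod_{j=1}^{i}P_j$. Consequently the order-statistic/role-assignment argument you sketch to manufacture an $i$ inside each factor is neither needed nor present in the paper; that is precisely the step you yourself flagged as the main obstacle, and it disappears once you follow the paper's decomposition. Note also that your averaging identity $\bar P(M)=\tfrac{1}{M}\sum_i P(i,M)$ (probability the tracked packet is decoded, by rank symmetry) differs from the paper's weighted form $\tfrac{1}{M}\sum_i i\,P(i,M)$; the paper simply asserts the latter without further comment.
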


        \begin{theorem}
        With CRA-SIC and SBA-SIC, the probability of the node successfully discovering a neighbor at $t$-th time slot is (\ref{eq_31}).
        \end{theorem}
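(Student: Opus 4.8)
The plan is to reuse, almost verbatim, the two-case decomposition from the proof of the plain CRA/SBA theorem: Case~1, in which node~A listens and node~B transmits so that A decodes B already in the first mini-slot, and Case~2, in which A transmits, B receives, and B answers in the second mini-slot. As in that proof it suffices to analyze a single beam, the ND process in each beam being independent and statistically identical. The only change is that a simultaneous transmission by other in-beam neighbors no longer automatically destroys the packet of interest: if $M$ packets collide at the receiver and $M\le n_0$, SIC recovers the target packet with the probability $\bar P(M)$ furnished by~(\ref{eq_L3}); by Lemma~\ref{lemma_2} this probability is negligible once $M>n_0$, so each collision count is summed only over the range $0\le m\le\min\{n_0-1,\,K-1\}$. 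Note that the term $m=0$ carries $\bar P(1)=1$ and therefore collapses back to the corresponding collision-free term of the plain theorem, a convenient sanity check on the final formula~(\ref{eq_31}).

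First I would treat Case~1. The event that A is receiving with its beam on B while B transmits toward A has probability $\tfrac{\theta}{2\pi}(1-P_t)\cdot\tfrac{\theta}{2\pi}P_t$ under CRA; conditionally, the number of A's other $K-1$ in-beam neighbors whose transmissions also reach A is $\text{Binomial}\bigl(K-1,\tfrac{\theta}{2\pi}P_t\bigr)$, and given that this number equals $m$ the $M=m+1$ colliding packets let A unpack B's packet with probability $\bar P(m+1)$. Summing, $P_R^{\text{CRA-SIC}}=\tfrac{\theta}{2\pi}(1-P_t)\tfrac{\theta}{2\pi}P_t\sum_{m}\binom{K-1}{m}\bigl(\tfrac{\theta}{2\pi}P_t\bigr)^m\bigl(1-\tfrac{\theta}{2\pi}P_t\bigr)^{K-1-m}\bar P(m+1)$. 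Case~2 is the mirror image: A transmits toward B with probability $\tfrac{\theta}{2\pi}P_t$, B listens toward A with probability $\tfrac{\theta}{2\pi}(1-P_t)$, and conditioning on the number of $\emph{B}$'s other in-beam neighbors that transmit toward B gives the same binomial-weighted sum, yielding $P_{T1}^{\text{CRA-SIC}}(t)$, the probability that B decodes A in the first mini-slot.

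For the second mini-slot I would note that, besides B, exactly $K-1-D(t-1)$ in-beam neighbors of A are still undiscovered, and each of them acknowledges A only if it was listening toward A and decoded A in the first mini-slot, an event of probability $P_{reply}^{\text{CRA}}$ as in the plain analysis. Conditioning on how many $m'$ of them reply, A then faces $m'+1$ overlapping acknowledgements and recovers B's with probability $\bar P(m'+1)$, so that $P_{T2}^{\text{CRA-SIC}}(t)=\sum_{m'}\binom{K-1-D(t-1)}{m'}\bigl(P_{reply}^{\text{CRA}}\bigr)^{m'}\bigl(1-P_{reply}^{\text{CRA}}\bigr)^{K-1-D(t-1)-m'}\bar P(m'+1)$, again truncated at $m'=n_0-1$. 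Combining the two cases exactly as before gives $P_{\mathrm A\to\mathrm B}^{\text{CRA-SIC}}(t)=P_R^{\text{CRA-SIC}}+P_{T1}^{\text{CRA-SIC}}\,P_{T2}^{\text{CRA-SIC}}(t)$, which is~(\ref{eq_31}); the SBA counterpart follows from the identical argument after deleting every beam-selection factor $\tfrac{\theta}{2\pi}$ (so the binomial parameters become $P_t$ and $P_{reply}^{\text{SBA}}$) and using that in SBA the transmit and receive beams are antipodal.

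The main obstacle is the probabilistic bookkeeping rather than any hard estimate: one must make sure that modelling the interferers as independent Bernoulli trials, conditioning on their $\emph{number}$, and invoking $\bar P(M)$ from~(\ref{eq_L3}) are mutually consistent — in particular that $\bar P(M)$ already averages over the interferers' positions, so that no extra spatial integration survives the conditioning — and that the truncation at $n_0$ discards only the vanishing mass quantified right after Lemma~\ref{lemma_2}. A smaller but necessary point is to justify, exactly as in the plain CRA/SBA proof, that the first-mini-slot decoding and the second-mini-slot reply-and-decoding may be treated as independent, which is what legitimizes the product $P_{T1}^{\text{CRA-SIC}}P_{T2}^{\text{CRA-SIC}}$ in the final expression.
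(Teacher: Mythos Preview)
Your Case~1 and the first half of Case~2 match the paper's argument exactly, but your second-mini-slot analysis diverges from~(\ref{eq_28})--(\ref{eq_30}) in two linked respects, and the resulting $P_{T2}$ would not be~(\ref{eq_30}).

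First, you restrict the set of potential repliers to the $K-1-D(t-1)$ neighbors that have not yet discovered A, carrying over the bookkeeping from the plain proof. Under SIC this is no longer valid: a neighbor $C$ that \emph{has} already discovered A may still send an acknowledgement in the same beam, because SIC lets $C$ decode, amid A's transmission, the packet of some third node $E\neq A$ that $C$ has not yet discovered, and $C$ then replies to $E$ in that direction. (Without SIC, A's packet guarantees a collision at $C$ whenever any $E$ also transmits, so $C$ stays silent---this is why the plain proof could ignore those $D(t-1)$ neighbors.) The paper therefore treats all $K-1$ neighbors as potential interferers in~(\ref{eq_30}) and splits $P_{reply}$ into the $D(t-1)/K$ and $(K-D(t-1))/K$ pieces of~(\ref{eq_28})--(\ref{eq_29}). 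Second, even for the neighbors that have not discovered A you invoke the \emph{plain} $P_{reply}^{CRA}$; but every node in the network runs SIC, so the probability that a given neighbor decodes anything in the first mini-slot is itself a binomial-weighted sum involving $\bar P(\cdot)$, which is exactly what~(\ref{eq_28})--(\ref{eq_29}) compute. Both points must be fixed for the combination $P_R+P_{T1}P_{T2}$ to yield~(\ref{eq_31}).
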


        \begin{proof}
        Similar to the proof of Theorem 1, in the $t$-th time slot,
        there are two cases that node A can find its undiscovered neighbor B.

        1) Case 1: In the first mini-slot, node A is in receiving state,
        node B is in transmitting state, and node A is able to unpack the signal of node B from the received signals using SIC.
        Then node A can receive the signal of node B with probability in (\ref{eq_24})
        using CRA-SIC and with probability in (\ref{eq_25}) using SBA-SIC.

        2) Case 2: In the first mini-slot, node A is in transmitting state,
        node B is in receiving state, and node B is able to unpack the signal of node A from the received signals using SIC.
        Then, node B can successfully receive the signal of node A with probability in (\ref{eq_26})
        using CRA-SIC and with probability in (\ref{eq_27}) using SBA-SIC.

        In the second mini-slot, except node B, the $K-1$ neighbors of node A will reply to the direction of node A,
        if they received a signal from an undiscovered neighbor in the direction of node A in the first mini-slot.
        The nodes can be further classified into the nodes that have discovered node A discovering
        a new unknown node and nodes that have not discovered node A discovering node A
        with probability in (\ref{eq_28}) using CRA-SIC and
        with probability in (\ref{eq_29}) using SBA-SIC.

        If node A can unpack the signal of node B from the received reply signals,
        node A can successfully receive the reply from node B with probability in (\ref{eq_30}).

        Therefore, node A discovers node B with probability shown in (\ref{eq_31}).
        \end{proof}

\begin{figure*}[!t]
\normalsize
\begin{equation}
\label{eq_24}
P_R^{CRA\_SIC}{\rm{ = }}\frac{\theta }{{2\pi }}\left( {1 - {P_t}} \right) \cdot \frac{\theta }{{2\pi }}{P_t} \cdot \sum\limits_{m = 0}^{\min \left( {K - 1,{n_0} - 1} \right)} {C_{K - 1}^m{{\left( {\frac{\theta }{{2\pi }}{P_t}} \right)}^m}{{\left( {1 - \frac{\theta }{{2\pi }}{P_t}} \right)}^{K - 1 - m}}} \bar P\left( {m + 1} \right)
\end{equation}
\begin{equation}
\label{eq_25}
P_R^{SBA\_SIC}{\rm{ = }}\left( {1 - {P_t}} \right) \cdot {P_t} \cdot \sum\limits_{m = 0}^{\min \left( {K - 1,{n_0} - 1} \right)} {C_{K - 1}^mP_t^m{{\left( {1 - {P_t}} \right)}^{K - 1 - m}}} \bar P\left( {m + 1} \right)
\end{equation}
\begin{equation}
\label{eq_26}
P_{T1}^{CRA\_SIC}{\rm{ = }}\frac{\theta }{{2\pi }}{P_t} \cdot \frac{\theta }{{2\pi }}\left( {1 - {P_t}} \right) \cdot \sum\limits_{m = 0}^{\min \left( {K - 1,{n_0} - 1} \right)} {C_{K - 1}^m{{\left( {\frac{\theta }{{2\pi }}{P_t}} \right)}^m}{{\left( {1 - \frac{\theta }{{2\pi }}{P_t}} \right)}^{K - 1 - m}}} \bar P\left( {m + 1} \right)
\end{equation}
\begin{equation}
\label{eq_27}
P_{T1}^{SBA\_SIC}{\rm{ = }}{P_t} \cdot \left( {1 - {P_t}} \right) \cdot \sum\limits_{m = 0}^{\min \left( {K - 1,{n_0} - 1} \right)} {C_{K - 1}^mP_t^m{{\left( {1 - {P_t}} \right)}^{K - 1 - m}}} \bar P\left( {m + 1} \right)
\end{equation}
\begin{equation}
\label{eq_28}
\begin{array}{l}
P_{reply}^{CRA\_SIC} = \frac{{D\left( {t - 1} \right)}}{K} \cdot \frac{\theta }{{2\pi }}\left( {1 - {P_t}} \right) \cdot C_{K - D\left( {t - 1} \right)}^1\frac{\theta }{{2\pi }}{P_t} \cdot \sum\limits_{n = 0}^{\min \left( {K - 2,{n_0} - 2} \right)} {C_{K - 2}^n{{\left( {\frac{\theta }{{2\pi }}{P_t}} \right)}^n}{{\left( {1 - \frac{\theta }{{2\pi }}{P_t}} \right)}^{K - 2 - n}}} \bar P\left( {n + 2} \right)\\
\;\;\;\;\;\;\;\;\;\;\;\;\;\;\; + \frac{{K - D\left( {t - 1} \right)}}{K} \cdot \frac{\theta }{{2\pi }}\left( {1 - {P_t}} \right) \cdot \sum\limits_{n = 0}^{\min \left( {K - 1,{n_0} - 1} \right)} {C_{K - 1}^n{{\left( {\frac{\theta }{{2\pi }}{P_t}} \right)}^n}{{\left( {1 - \frac{\theta }{{2\pi }}{P_t}} \right)}^{K - 1 - n}}} \bar P\left( {n + 1} \right)
\end{array}
\end{equation}
\begin{equation}
\label{eq_29}
\begin{array}{l}
P_{reply}^{SBA\_SIC} = \frac{{D\left( {t - 1} \right)}}{K} \cdot \left( {1 - {P_t}} \right) \cdot C_{K - D\left( {t - 1} \right)}^1{P_t} \cdot \sum\limits_{n = 0}^{\min \left( {K - 2,{n_0} - 2} \right)} {C_{K - 2}^nP_t^n{{\left( {1 - {P_t}} \right)}^{K - 2 - n}}} \bar P\left( {n + 2} \right)\\
\;\;\;\;\;\;\;\;\;\;\;\;\;\;\; + \frac{{K - D\left( {t - 1} \right)}}{K} \cdot \left( {1 - {P_t}} \right) \cdot \sum\limits_{n = 0}^{\min \left( {K - 1,{n_0} - 1} \right)} {C_{K - 1}^nP_t^n{{\left( {1 - {P_t}} \right)}^{K - 1 - n}}} \bar P\left( {n + 1} \right)
\end{array}
\end{equation}
\begin{equation}
\label{eq_30}
P_{T2}^{CRA/SBA\_SIC}\left( t \right) = \sum\limits_{m = 0}^{\min \left( {K - 1,{n_0} - 1} \right)} {C_{K - 1}^m} {\left( {P_{reply}^{CRA/SBA\_SIC}} \right)^m}{\left( {{\rm{1 - }}P_{reply}^{CRA/SBA\_SIC}} \right)^{K - 1 - m}}\bar P\left( {m + 1} \right)
\end{equation}
\begin{equation}
\label{eq_31}
P_{{\rm{A}} \to {\rm{B}}}^{CRA/SBA\_SIC}\left( t \right) = P_R^{CRA/SBA\_SIC} + P_{T1}^{CRA/SBA\_SIC}P_{T2}^{CRA/SBA\_SIC}\left( t \right)
\end{equation}
\hrulefill \vspace*{4pt}
\end{figure*}

        \textbf{\textit{2) Imperfect SIC}}

        \begin{lemma}
        The maximum number of simultaneous unpacking data packets by imperfect SIC
        is upper bounded by $\left\lfloor {2 + {{\log }_{1 + \beta }}\frac{{16{\pi ^2}{r^2}}}{{\lambda _0^2\beta }}} \right\rfloor $.
        In Appendix D we provide a proof.
        \end{lemma}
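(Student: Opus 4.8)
The plan is to reduce the imperfect-SIC case to the perfect-SIC case already settled in Lemma \ref{lemma_2}: I will argue that any configuration of $M$ colliding packets that is jointly decodable under imperfect SIC is, with the same geometry, also jointly decodable under perfect SIC, so the largest admissible $M$ cannot increase when cancellation is imperfect, and the bound $\left\lfloor 2 + \log_{1+\beta}\frac{16\pi^2 r^2}{\lambda_0^2\beta}\right\rfloor$ is inherited verbatim.

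First I would write the per-stage decoding condition for imperfect SIC. Order the $M$ packets so that $d_1 \le \cdots \le d_M$; by the free-space model $S_i = P_T G_T G_R\left(\frac{\lambda_0}{4\pi d_i}\right)^2$ the received powers then satisfy $S_1 \ge \cdots \ge S_M$, and SIC peels them in that order. At stage $i$ the already-cancelled packets $1,\dots,i-1$ leave residuals $C_1,\dots,C_{i-1}$ (with $C_j=\xi S_j$), the packets $i+1,\dots,M$ are still present at full power, and the aggregate noise $N$ (the receiver noise $N_0$ together with the noise introduced by imperfect cancellation) is present throughout, so the stage-$i$ SINR is
\begin{equation}
\gamma_i = \frac{S_i}{\sum_{j=1}^{i-1} C_j + \sum_{j=i+1}^{M} S_j + N},
\end{equation}
and recovering all $M$ packets requires $\gamma_i \ge \beta$ for every $i=1,\dots,M$.

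Second, I would compare with perfect SIC. Since $C_j\ge 0$ and $N\ge 0$, the denominator of $\gamma_i$ is at least $\sum_{j=i+1}^{M} S_j$, which is exactly the interference seen at stage $i$ under perfect SIC with the same geometry; hence $\gamma_i \le S_i/\sum_{j>i} S_j$ for every $i$. Therefore $\gamma_i\ge\beta$ for all $i$ implies $S_i/\sum_{j>i}S_j \ge \beta$ for all $i$, i.e.\ the configuration is decodable under perfect SIC as well. Consequently the maximum number of simultaneously recoverable packets under imperfect SIC is at most the corresponding maximum under perfect SIC, which by Lemma \ref{lemma_2} equals $\left\lfloor 2 + \log_{1+\beta}\frac{16\pi^2 r^2}{\lambda_0^2\beta}\right\rfloor$, establishing the claimed upper bound.

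The main obstacle is being precise about the meaning of ``maximum number that can be unpacked'' and keeping it consistent with Lemma \ref{lemma_2}: there it is the largest $M$ for which there exists an admissible geometry (transmitter distances bounded below by the unity-path-loss distance $\lambda_0/(4\pi)$ and above by the communication radius $r$) making every peeling stage succeed. The monotonicity argument must be carried out for a fixed admissible geometry so that the per-stage comparison $\gamma_i \le S_i/\sum_{j>i}S_j$ is legitimate; once this is in place the bound transfers with no further estimation. A minor point worth stating explicitly is that the residual terms $C_j$ and the noise $N$ enter the stage-$i$ SINR only as nonnegative additive contributions to the denominator, so imperfect cancellation can never make a stage easier than in the ideal case, which is exactly why the inequality points in the direction needed.
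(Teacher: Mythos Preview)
Your proposal is correct and follows essentially the same route as the paper: write down the per-stage SINR conditions for imperfect SIC (the paper's (\ref{eq_35})--(\ref{eq_36})), observe that the nonnegative residual terms $\xi S_j$ and noise $N$ only enlarge each denominator relative to the perfect-SIC SIR conditions (\ref{eq_12}), and conclude that the perfect-SIC bound of Lemma~\ref{lemma_2} is inherited. If anything, you make the monotonicity comparison $\gamma_i \le S_i/\sum_{j>i}S_j$ more explicit than the paper does, which simply states the conclusion after displaying the imperfect-SIC system.
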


    \subsection{SIC and MPR based ND algorithms}

    \begin{theorem}
    With CRA-SIC-MPR and SBA-SIC-MPR,
    the probability of a node discovering a neighbor at $t$-th time slot is (\ref{eq_44}).
    \end{theorem}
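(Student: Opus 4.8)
The plan is to reuse the two-case decomposition of the proofs of Theorem 1 and Theorem 2 and to insert one additional layer of conditioning that records which of the $h$ modulation methods each transmitter selects. As before, in the $t$-th time slot node A discovers an undiscovered neighbor B in exactly one of two mutually exclusive ways: in Case 1 node A is receiving in the first mini-slot, node B transmits toward A, and A recovers B's packet; in Case 2 node A transmits toward B, node B recovers A's packet in the first mini-slot, and in the second mini-slot node A recovers B's acknowledgment. The beam-selection factors $\frac{\theta}{2\pi}$ for CRA and the transmit/receive factors $P_t$ and $1-P_t$ carry over verbatim from (\ref{eq_24})--(\ref{eq_27}); the only quantity that MPR changes is the probability that the targeted packet survives the collision at its intended receiver, so that is where I would concentrate.

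First I would isolate a single collision at one receiver. Conditioning on the target transmitter having picked the relevant beam and transmitted, the number $m$ of the remaining $K-1$ in-beam neighbors that also transmit toward that receiver is binomially distributed with parameters $K-1$ and $\frac{\theta}{2\pi}P_t$ under CRA (respectively $K-1$ and $P_t$ under SBA), which reproduces the binomial weights already present in (\ref{eq_24})--(\ref{eq_27}). The new ingredient is that the target transmitter and each of these $m$ interferers independently draws a modulation index uniformly from the $h$ methods; hence, by the MPR rule, an interferer whose modulation differs from the target's can be stripped off without affecting the target, and the packets that truly collide with it are the $j$ interferers that happened to choose the target's modulation, where $j$ given $m$ is binomial with parameters $m$ and $\frac{1}{h}$. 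Within that modulation class there are $j+1$ colliding packets, and by Lemma 3 SIC recovers the target with probability $\bar P(j+1)$, valid for $j+1\le n_0$ with $n_0$ given by Lemma 2 (and still an upper bound under imperfect SIC by Lemma 4). Summing over $j$ from $0$ to $\min(m,n_0-1)$ with weights $C_{m}^{j}\left(\frac{1}{h}\right)^{j}\left(1-\frac{1}{h}\right)^{m-j}$ and then over $m$ from $0$ to $K-1$ therefore replaces the single factor $\bar P(m+1)$ of Theorem 2 by a nested double sum; substituting this into the Case-1 and Case-2 templates of (\ref{eq_24})--(\ref{eq_27}) gives their SIC-MPR analogues.

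Next I would treat the second mini-slot exactly as in the proof of Theorem 2. Among the $K-1$ in-beam neighbors of A other than B, those that received a fresh discovery in A's direction reply; those that had already discovered A are now discovering a different new node in that direction (weight $\frac{D(t-1)}{K}$) while the others are discovering A (weight $\frac{K-D(t-1)}{K}$), reproducing the split in (\ref{eq_28})--(\ref{eq_29}). Node A must recover B's acknowledgment out of these simultaneous replies, so the same two-layer conditioning applies: a binomial count of repliers, then a binomial thinning with parameter $\frac{1}{h}$ for modulation coincidence with B's acknowledgment, then a $\bar P(\cdot)$ factor from Lemma 3 with the $n_0-1$ cap placed on the inner co-modulation sum. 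Combining the two cases via $P_{\mathrm{A}\to\mathrm{B}}=P_R+P_{T1}P_{T2}$ as in (\ref{eq_31}) then gives (\ref{eq_44}).

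The hard part will be justifying the independence structure that legitimizes these nested binomial sums: that a node's modulation choice is independent of its distance to the receiver, so that the distance-ordering argument behind Lemma 3 still applies after thinning by modulation; that the $j+1$ co-modulation packets form a self-contained SIC instance, unaffected by the packets in the remaining $h-1$ classes; and that moving the $n_0-1$ cap from the outer sum (as in Theorem 2) to the inner sum faithfully captures the fact that MPR lets the total number of colliding packets exceed $n_0$ provided no single modulation class does. Keeping the $D(t-1)$ bookkeeping in the reply term consistent with the two binomial layers is the most error-prone part of the algebra, but it is a routine extension of (\ref{eq_28})--(\ref{eq_29}).
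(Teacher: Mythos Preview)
Your decomposition into the same two cases, the beam/state factors, the binomial count of in-beam transmitters, and the use of Lemma~3 with the $n_0$ cap placed on the co-modulation count are all correct and match the paper's argument. The one structural difference is that you build a \emph{nested} double sum---first binomial in $m$ with parameter $\frac{\theta}{2\pi}P_t$ (resp.\ $P_t$), then binomial in $j$ with parameter $\tfrac{1}{h}$---whereas the paper collapses these two layers into a single binomial with thinned parameter $\frac{\theta}{2\pi h}P_t$ (resp.\ $\tfrac{P_t}{h}$), as in (\ref{eq_37})--(\ref{eq_40}) and (\ref{eq_43}). The two forms are equivalent by the standard binomial-thinning identity, and your version makes the role of the modulation draw more transparent at the cost of heavier formulas; the paper's version is what is actually stated in (\ref{eq_44}), so you would want to perform that collapse explicitly to arrive at the displayed expressions.

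One detail you gloss over: inside $P_{reply}$ for the nodes that have already discovered A, the paper further splits on whether A's packet (which is certainly present, since in Case~2 A is transmitting) shares the modulation of the new node being discovered, giving the $\tfrac{1}{h}\cdot\bar P(n+2)$ versus $(1-\tfrac{1}{h})\cdot\bar P(n+1)$ branches in (\ref{eq_41})--(\ref{eq_42}). Your description of the reply term does not single this out, though it falls under the ``error-prone bookkeeping'' you flag; without it your $P_{reply}$ would not match the paper's and hence (\ref{eq_43})--(\ref{eq_44}) would not follow literally.
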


    \begin{proof}
    Suppose that there are $h$ different modulation methods in the network.
    Each node randomly selects a modulation method with probability $\frac{1}{h}$ to modulate the packet to be sent.

    Similar to the proof of Theorem 1,
    in the $t$-th time slot,
    the process of node A discovering its undiscovered neighbor B can be divided into the following two cases.

    1) Case 1: In the first mini-slot, node A is in receiving state, node B is in transmitting state,
    and node A is able to unpack the signal of node B from the received signals, i.e.,
    the signal of node B has a different modulation method compared with other signals or can be unpacked by the SIC.
    Thus, node A can successfully receive the signal from node B with probability in (\ref{eq_37})
    using CRA-SIC-MPR and with probability in (\ref{eq_38}) using SBA-SIC-MPR.

    2) Case 2: In the first mini-slot, node A is in transmitting state, node B is in receiving state,
    and node B can unpack the signal of node A from the received signals.
    Then, node B can successfully receive the signal of node A with probability in (\ref{eq_39})
    using CRA-SIC-MPR and with probability in (\ref{eq_40}) using SBA-SIC-MPR.

    In the second mini-slot, except node B,
    the $K-1$ neighbors of node A will reply to the direction of node A if they received the signals
    from an undiscovered neighbor in the direction of node A in the first mini-slot.
    The nodes can be further classified into the nodes that have discovered node A discovering
    a new node and nodes that have not discovered node A discovering node A,
    whose probability is (\ref{eq_41})
    using CRA-SIC-MPR and (\ref{eq_42}) using SBA-SIC-MPR.

    If node A can unpack the signal of node B from the received reply signals,
    node A can successfully receive the reply from
    node B with the probability in (\ref{eq_43}).

    Therefore, node A discovers node B with the probability shown in (\ref{eq_44}).
    \end{proof}

    Fig. \ref{a1} compares the variation of $P_{{\rm{A}} \to {\rm{B}}}^x\left( t \right)$ of
    the above six ND algorithms as the number of discovered neighbors increases.
    It is revealed that the introduction of SIC and MPR always increases the discovery probability.
    More analysis and discussion are presented in Section VI.

\begin{figure}[!t]
\centering
\includegraphics[width=0.5\textwidth]{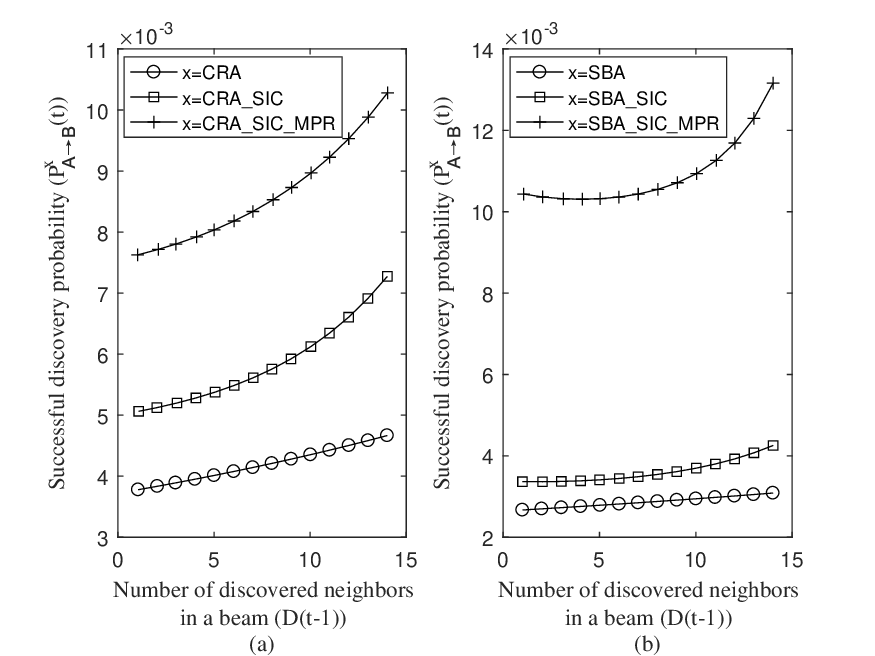}
\caption{Discover probability for varying number of discovered neighbors ($\theta  = \frac{\pi }{2}$, ${P_t} = 0.15$, $h=2$, $K=15$)}
\label{a1}
\end{figure}

\begin{figure*}[!t]
\normalsize
\begin{equation}
\label{eq_37}
P_R^{CRA\_SIC\_MPR}{\rm{ = }}\frac{\theta }{{2\pi }}\left( {1 - {P_t}} \right) \cdot \frac{\theta }{{2\pi }}{P_t} \cdot \sum\limits_{m = 0}^{\min \left( {K - 1,{n_0} - 1} \right)} {C_{K - 1}^m{{\left( {\frac{\theta }{{2\pi h}}{P_t}} \right)}^m}{{\left( {1 - \frac{\theta }{{2\pi h}}{P_t}} \right)}^{K - 1 - m}}}\bar P\left( {m + 1} \right)
\end{equation}
\begin{equation}
\label{eq_38}
P_R^{SBA\_SIC\_MPR}{\rm{ = }}\left( {1 - {P_t}} \right) \cdot {P_t} \cdot \sum\limits_{m = 0}^{\min \left( {K - 1,{n_0} - 1} \right)} {C_{K - 1}^m{{\left( {\frac{{{P_t}}}{h}} \right)}^m}{{\left( {1 - \frac{{{P_t}}}{h}} \right)}^{K - 1 - m}}}\bar P\left( {m + 1} \right)
\end{equation}
\begin{equation}
\label{eq_39}
P_{T1}^{CRA\_SIC\_MPR}{\rm{ = }}\frac{\theta }{{2\pi }}{P_t} \cdot \frac{\theta }{{2\pi }}\left( {1 - {P_t}} \right) \cdot \sum\limits_{m = 0}^{\min \left( {K - 1,{n_0} - 1} \right)} {C_{K - 1}^m{{\left( {\frac{\theta }{{2\pi h}}{P_t}} \right)}^m}{{\left( {1 - \frac{\theta }{{2\pi h}}{P_t}} \right)}^{K - 1 - m}}}\bar P\left( {m + 1} \right)
\end{equation}
\begin{equation}
\label{eq_40}
P_{T1}^{SBA\_SIC\_MPR}{\rm{ = }}{P_t} \cdot \left( {1 - {P_t}} \right) \cdot \sum\limits_{m = 0}^{\min \left( {K - 1,{n_0} - 1} \right)} {C_{K - 1}^m{{\left( {\frac{{{P_t}}}{h}} \right)}^m}{{\left( {1 - \frac{{{P_t}}}{h}} \right)}^{K - 1 - m}}}\bar P\left( {m + 1} \right)
\end{equation}
\begin{equation}
\label{eq_41}
\begin{array}{l}
P_{reply}^{CRA\_SIC\_MPR} = \frac{{D\left( {t - 1} \right)}}{K} \cdot \frac{\theta }{{2\pi }}\left( {1 - {P_t}} \right) \cdot C_{K - D\left( {t - 1} \right)}^1\frac{\theta }{{2\pi }}{P_t}\\
\;\;\;\;\;\;\;\;\;\;\;\;\;\;\;\;\;\;\;\;\;\;\;\;\;\;\; \cdot \left[ {\frac{1}{h} \cdot \sum\limits_{n = 0}^{\min \left( {K - 2,{n_0} - 2} \right)} {C_{K - 2}^n{{\left( {\frac{\theta }{{2\pi h}}{P_t}} \right)}^n}{{\left( {1 - \frac{\theta }{{2\pi h}}{P_t}} \right)}^{K - 2 - n}}} \bar P\left( {n + 2} \right)} \right.\\
\;\;\;\;\;\;\;\;\;\;\;\;\;\;\;\;\;\;\;\;\;\;\;\;\;\;\left. { + \left( {1 - \frac{1}{h}} \right) \cdot \sum\limits_{n = 0}^{\min \left( {K - 2,{n_0} - 1} \right)} {C_{K - 2}^n{{\left( {\frac{\theta }{{2\pi h}}{P_t}} \right)}^n}{{\left( {1 - \frac{\theta }{{2\pi h}}{P_t}} \right)}^{K - 2 - n}}} \bar P\left( {n + 1} \right)} \right]\\
\;\;\;\;\;\;\;\;\;\;\;\;\;\;\;\;\;\;\;\;\;\;\;\;\;\; + \frac{{K - D\left( {t - 1} \right)}}{K} \cdot \frac{\theta }{{2\pi }}\left( {1 - {P_t}} \right) \cdot \sum\limits_{n = 0}^{\min \left( {K - 1,{n_0} - 1} \right)} {C_{K - 1}^n{{\left( {\frac{\theta }{{2\pi h}}{P_t}} \right)}^n}{{\left( {1 - \frac{\theta }{{2\pi h}}{P_t}} \right)}^{K - 1 - n}}} \bar P\left( {n + 1} \right)
\end{array}
\end{equation}
\begin{equation}
\label{eq_42}
\begin{array}{l}
P_{reply}^{SBA\_SIC\_MPR} = \frac{{D\left( {t - 1} \right)}}{K} \cdot \left( {1 - {P_t}} \right) \cdot C_{K - D\left( {t - 1} \right)}^1{P_t}\\
 \cdot \left[ {\frac{1}{h} \cdot \sum\limits_{n = 0}^{\min \left( {K - 2,{n_0} - 2} \right)} {C_{K - 2}^n{{\left( {\frac{{{P_t}}}{h}} \right)}^n}{{\left( {1 - \frac{{{P_t}}}{h}} \right)}^{K - 2 - n}}} \bar P\left( {n + 2} \right) + (1 - \frac{1}{h}) \cdot \sum\limits_{n = 0}^{\min \left( {K - 2,{n_0} - 1} \right)} {C_{K - 2}^n{{\left( {\frac{{{P_t}}}{h}} \right)}^n}{{\left( {1 - \frac{{{P_t}}}{h}} \right)}^{K - 2 - n}}} \bar P\left( {n + 1} \right)} \right]\\
 + \frac{{K - D\left( {t - 1} \right)}}{K} \cdot \left( {1 - {P_t}} \right) \cdot \sum\limits_{n = 0}^{\min \left( {K - 1,{n_0} - 1} \right)} {C_{K - 1}^n{{\left( {\frac{{{P_t}}}{h}} \right)}^n}{{\left( {1 - \frac{{{P_t}}}{h}} \right)}^{K - 1 - n}}} \bar P\left( {n + 1} \right)
\end{array}
\end{equation}
\begin{equation}
\label{eq_43}
P_{T2}^{CRA/SBA\_SIC\_MPR}\left( t \right) = \sum\limits_{m = 0}^{\min \left( {K - 1,{n_0} - 1} \right)} {C_{K - 1}^m} {\left( {\frac{1}{h} \cdot P_{reply}^{CRA/SBA\_SIC\_MPR}} \right)^m}{\left( {{\rm{1 - }}\frac{1}{h} \cdot P_{reply}^{CRA/SBA\_SIC\_MPR}} \right)^{K - 1 - m}}\bar P\left( {m + 1} \right)
\end{equation}
\begin{equation}
\label{eq_44}
P_{{\rm{A}} \to {\rm{B}}}^{CRA/SBA\_SIC\_MPR}\left( t \right) = P_R^{CRA/SBA\_SIC\_MPR} + P_{T1}^{CRA/SBA\_SIC\_MPR}P_{T2}^{CRA/SBA\_SIC\_MPR}\left( t \right)
\end{equation}
\hrulefill \vspace*{4pt}
\end{figure*}

    \subsection{Expected number of time slots}

    \begin{theorem}
    The expectation of the number of time slots expected by a node to find all neighbors in all beams is
    \begin{equation}
    E\left( {{T_{all}}} \right) = \frac{{2\pi }}{\theta }\sum\limits_{j = 0}^{K - 1} {\frac{1}{{\left( {K - j} \right){P_{{\rm{A}} \to {\rm{B}}}}\left( {t|D\left( {t - 1} \right) = j} \right)}}} .
    \end{equation}
    \end{theorem}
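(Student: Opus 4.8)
The plan is to reduce the statement to a coupon-collector computation carried out inside a single beam, and then to lift the per-beam result to all $\tfrac{2\pi}{\theta}$ directions. First I would fix one beam of node A and follow the count $D(t-1)$ of neighbors in that beam that have already finished the two-way handshake. The key structural fact, visible in Theorems 1--3, is that the per-slot discovery probability $P_{A\to B}(t)$ depends on $t$ \emph{only} through $D(t-1)$; writing it as $P_{A\to B}\bigl(t\,\big|\,D(t-1)=j\bigr)$ therefore turns the discovery dynamics into a time-homogeneous process once we condition on how many neighbors are already known. So during the stretch of slots in which exactly $j$ of the (on average) $K$ neighbors in the beam are discovered, the slots are i.i.d.\ trials.

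Next I would evaluate the probability of advancing from $j$ to $j+1$ discovered neighbors in one such slot. Each of the $K-j$ still-undiscovered neighbors can play the role of ``B'' with the \emph{same} conditioning value $D(t-1)=j$, so each is discovered in the slot with probability $P_{A\to B}\bigl(t\,\big|\,D(t-1)=j\bigr)$. Using the approximation of the same flavour as in the preceding theorems — that the chance that two or more new handshakes complete in the same slot is negligible, equivalently $1-(1-p)^{K-j}\approx(K-j)p$ for the small per-pair probability $p$ — the probability of making \emph{some} new discovery in the slot is $(K-j)\,P_{A\to B}\bigl(t\,\big|\,D(t-1)=j\bigr)$. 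Hence the number of slots $\tau_j$ spent with exactly $j$ discovered neighbors is geometric with that parameter, so $E[\tau_j]=\bigl[(K-j)\,P_{A\to B}(t\,|\,D(t-1)=j)\bigr]^{-1}$.

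The number of slots to discover all $K$ neighbors in the beam is $\sum_{j=0}^{K-1}\tau_j$, so by linearity of expectation its mean is $\sum_{j=0}^{K-1}\bigl[(K-j)\,P_{A\to B}(t\,|\,D(t-1)=j)\bigr]^{-1}$. Finally I would convert this per-beam count into a count of total slots: by the modelling assumption already invoked in the proof of Theorem 1, the ND processes in the $\tfrac{2\pi}{\theta}$ beams are statistically identical and (effectively) independent, and under the scanning structure the node must work through all $\tfrac{2\pi}{\theta}$ beams; multiplying the per-beam expectation by $\tfrac{2\pi}{\theta}$ then yields $E(T_{all})=\tfrac{2\pi}{\theta}\sum_{j=0}^{K-1}\bigl[(K-j)\,P_{A\to B}(t\,|\,D(t-1)=j)\bigr]^{-1}$.

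The main obstacle — and the step that is genuinely heuristic rather than rigorous — is replacing the exact per-slot probability of making any new discovery by $(K-j)\,P_{A\to B}(\cdot)$: this treats the $K-j$ candidate handshakes as disjoint events, models $D(t-1)$ as a clean birth process, and ignores that \emph{which} neighbor gets discovered is correlated with the interference geometry governing subsequent slots, as well as that $K$ is only an average. A secondary delicate point is the $\tfrac{2\pi}{\theta}$ scaling, which presumes that completing one beam leaves the remaining beams' workloads unchanged; both of these rest on the independence-across-beams and memorylessness assumptions rather than being derived from the model.
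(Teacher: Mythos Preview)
Your proposal is correct and follows essentially the same approach as the paper: compute the geometric waiting time to go from $j$ to $j+1$ discovered neighbors in a beam via $P_{next}=(K-j)P_{A\to B}(t\mid D(t-1)=j)$, sum over $j$, and multiply by $2\pi/\theta$. If anything, you are more explicit than the paper about the coupon-collector structure and the heuristic nature of the disjoint-events and beam-independence assumptions, which the paper's proof simply uses without comment.
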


    \begin{proof}
    In a beam, when the number of neighbors that have been discovered by node A is $j$,
    the probability that node A finds the next neighbor is
    \begin{equation}
    {P_{next}} = \left( {K - j} \right){P_{{\rm{A}} \to {\rm{B}}}}\left( {t|D\left( {t - 1} \right) = j} \right).
    \end{equation}

    Furthermore, assuming that the number of time slots required by node A to
    discovery the next neighbor in the beam is ${T_j}$,
    the expectation of ${T_j}$ is
    \begin{equation}
    \label{eq_338}
    \begin{array}{*{20}{l}}
    {E\left( {{T_j}} \right){\rm{ = }}\sum\limits_{t = 1}^\infty  {{{\left( {1 - {P_{next}}} \right)}^{t - 1}} \cdot {P_{next}} \cdot {t}} }\\
    {\;\;\;\;\;\;\;\;\;\;\;{\rm{ = }}\frac{1}{{{P_{next}}}}}\\
    {\;\;\;\;\;\;\;\;\;\;\; = \frac{1}{{\left( {K - j} \right){P_{suc}}(t|D\left( {t - 1} \right) = j)}},}
    \end{array}
    \end{equation}
    where $j$ is the number of neighbors that have been discovered in the beam.

    Summing all $E\left( {{T_j}} \right)$ from $j=0$ to $j=K-1$,
    the expectation of the number of time slots required for node A to find all neighbors in a beam can be obtained.
    Then, multiplying by $\frac{{2\pi }}{\theta }$ yields the number of time slots expected by node A to find all neighbors in all beams,
    which is obtained as Theorem 4.
    \end{proof}

    The proposed ND algorithms are designed to reduce the ND time.
    Fig. \ref{a2} compares the number of time slots required to discover 95\% of neighbors of the above six algorithms with different numbers of nodes.
    It is revealed from the figure that the introduction of both SIC and MPR reduces the ND time.
    More analysis and discussion are presented in Section VI.

\begin{figure}[!t]
\centering
\includegraphics[width=0.5\textwidth]{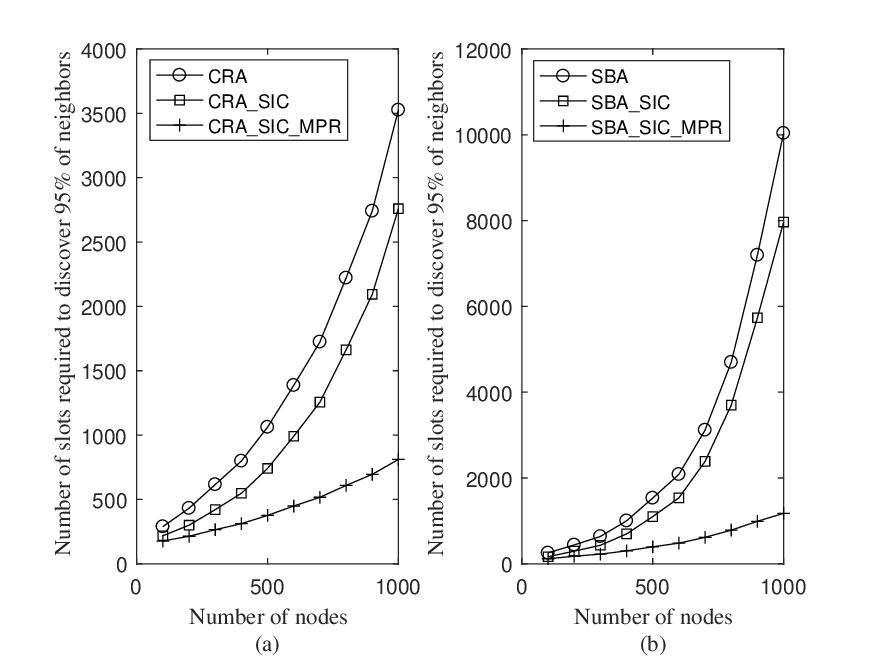}
\caption{Number of slots required to discover 95\% of neighbors for varying number of nodes ($h=2$)}
\label{a2}
\end{figure}

\section{Simulation Results and Analysis}
    In this section, the comparison between analytical results and simulation results,
    the effects of different parameters on ND time and the performance improvement of
    ND algorithms based on SIC and MPR are simulated.
    This section assumes that the nodes are randomly and uniformly placed in
    an area of $3000\;{\rm{m}} \times 3000\;{\rm{m}}$ and the transmission range is 800 m.

    \subsection{Verification of theoretical derivation}
    In Section V, we derive the discovery probabilities and the expected number of time slots
    of the algorithms proposed in this paper.
    Fig. \ref{7ST}(a) and Fig. \ref{7ST}(b) show the theoretical and simulation results of
    the fraction of discovered neighbors versus the number of time slots
    with the algorithms based on CRA and the algorithms based on SBA respectively when
    the number of nodes $N$ is 300 and the number of modulation methods $h$ is 3.
    The trends and values of the theoretical and simulation results are very close,
    which proves the theoretical derivation is reasonable and correct.
    In the following, we analyze proposed ND algorithms from the perspective of simulation, and its correctness is supported by Fig. \ref{7ST}.

\begin{figure}[!t]
\centering
\includegraphics[width=0.5\textwidth]{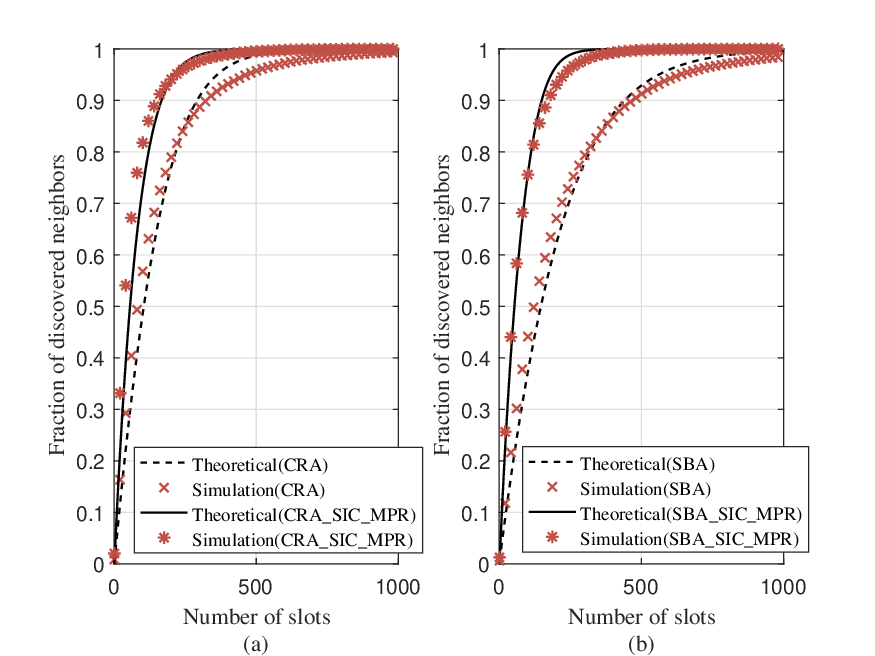}
\caption{Comparison of theoretical results and simulation results ($N=300$, $h=3$)}
\label{7ST}
\end{figure}

    \subsection{ND algorithms with and without SIC}
    Firstly, we compare SBA, CRA, SBA-SIC, and CRA-SIC with different numbers of nodes.
    Fig. \ref{7} illustrates the fraction of discovered neighbors,
    which is the ratio of the number of discovered neighbors to the number of total neighbors,
    during the ND processes of the above four algorithms when the unpacking threshold $\beta$ is 4,
    ${P_t}$ is 0.5, $\theta$ is $\frac{\pi }{6}$ and the numbers of nodes are 100, 250 and 500.

\begin{figure*}[!t]
\centering
\hspace{-7mm}
\subfigure[ $N=100$]{
\includegraphics[width=6cm]{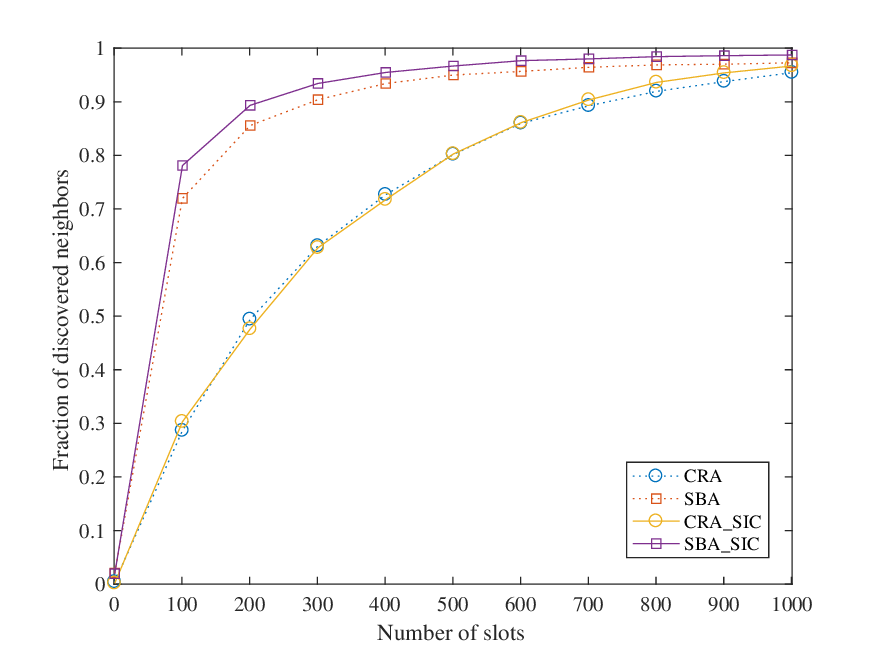}
\label{7a}
}
\hspace{-7mm}
\subfigure[ $N=250$]{
\includegraphics[width=6cm]{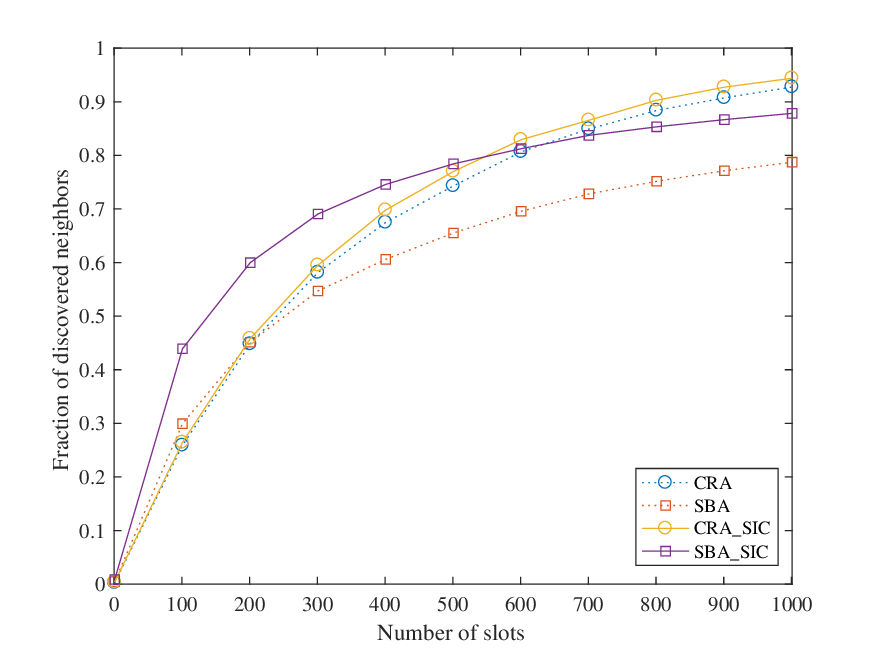}
\label{7b}
}
\hspace{-7mm}
\subfigure[ $N=500$]{
\includegraphics[width=6cm]{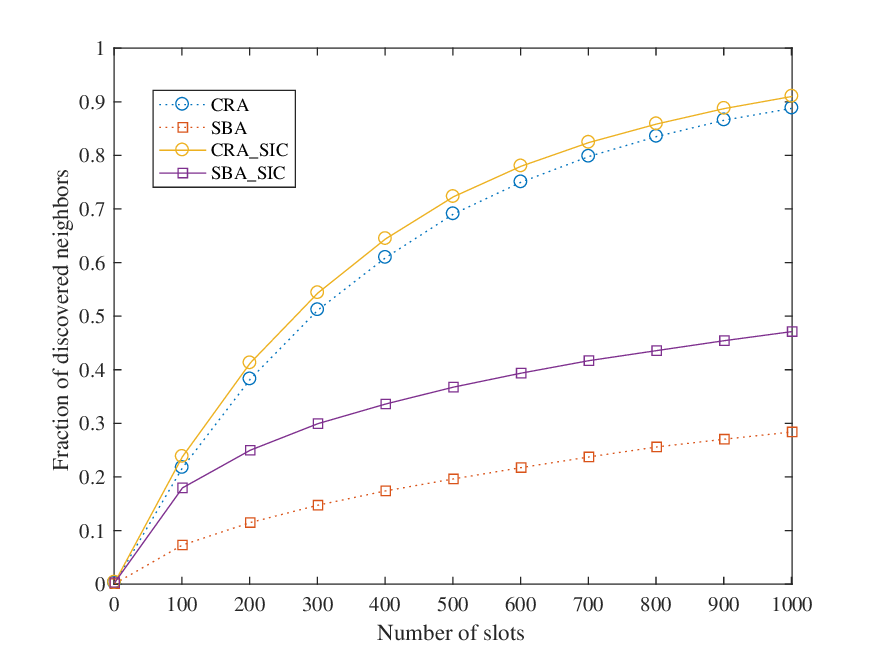}
\label{7c}
}
\hspace{-7mm}
\caption{Neighbor discovery process with different number of nodes (${P_t} = 0.5$, $\theta  = \frac{\pi }{6}$).}
\label{7}
\end{figure*}

    According to Fig. \ref{7}, the following conclusions are drawn.
    \begin{enumerate}
    \item The number of time slots required to complete ND is increasing
    with the increase of the number of nodes for the above four ND algorithms,
    since more nodes require more handshakes to establish network topology.
    \item When there are few nodes ($N=100$),
    SBA based ND algorithms have better performance compared with CRA based ND algorithms,
    since SBA based ND algorithms have a pre-defined scan order,
    the transmit and receive beams are easier to be aligned.
    However, when there are many nodes ($N=500$), the ND algorithms based on CRA have better performance,
    since the packets are not prone to be collided
    with the randomly selected beam characteristics of CRA based algorithms.
    When the number of nodes is 250,
    the SBA based ND algorithms are fast in the early stage and slow down
    when the number of time slots increases.
    \item The ND time of SIC based algorithms is smaller than that
    of the ND algorithms without SIC,
    since SIC is able to unpack collided packets.
    \end{enumerate}

        Then, the performance of CRA and SBA algorithms based on perfect SIC
        with respect to the beam width of the directional antenna $\theta $ and
        the probability of node transmission ${P_t}$ is revealed in Fig. \ref{8} and Fig. \ref{9}, respectively.

        Fig. \ref{8} illustrates the efficiency of SBA-SIC and CRA-SIC under different $\theta$ and ${P_t}$
        when $N$ is 300 and the number of time slots $T$ is 400.
        For SBA-SIC, when the ${P_t}$ is smaller than 0.1, the efficiency of ND algorithms improves with the increase of beam width $\theta$.
        When the ${P_t}$ is larger than 0.1,
        the ND algorithms with small beam width are more efficient than those with large beam width.
        The reason is that the number of neighbors in a beam increases when $\theta$ increases.
        When ${P_t}$ is large, a small $\theta$ reduces the number of neighbors in transmitting state,
        decreasing the probability of packet collision.
        Similarly, when ${P_t}$ is small, a large $\theta$ ensures fast ND.
        For SBA-SIC, the optimal ${P_t}$ is around 0.1.
        When $P_t=0.1$, the ND algorithms have similar performance.
        The phenomenon for CRA-SIC is similar to SBA-SIC.
        Since nodes randomly select beams, it is more difficult for them to find neighbors compared with SBA-SIC.
        Hence, the optimal ${P_t}$ and $\theta $ of CRA-SIC are both larger than those of SBA-SIC.
        Moreover, the optimal ${P_t}$ for CRA-SIC algorithms increases
        with the decrease of beam width, as shown in Fig. \ref{8b}.

\begin{figure}[!t]
\centering
\hspace{-10mm}
\subfigure[SBA-SIC]{
\includegraphics[width=5cm]{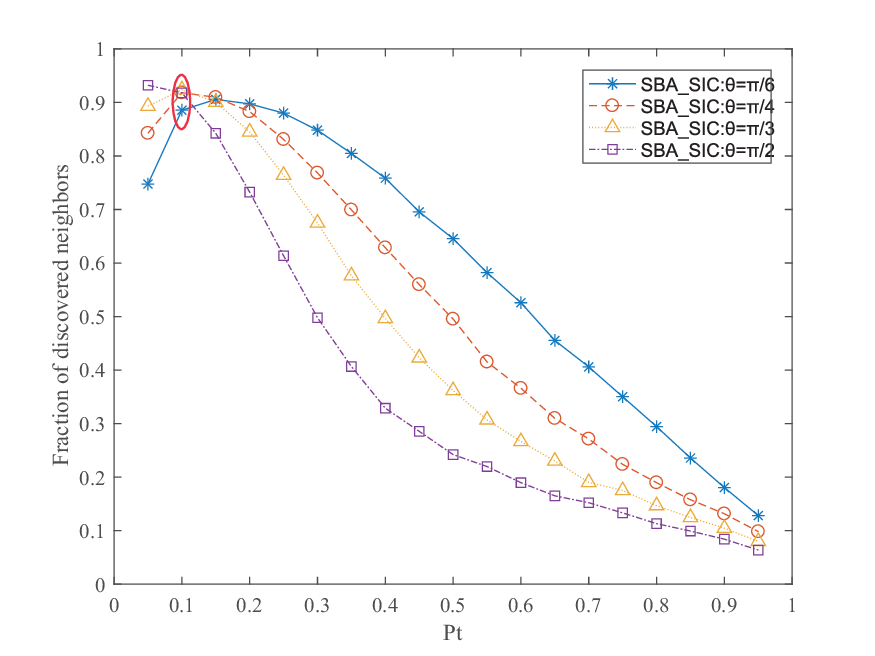}
\label{8a}
}
\hspace{-9mm}
\subfigure[CRA-SIC]{
\includegraphics[width=5cm]{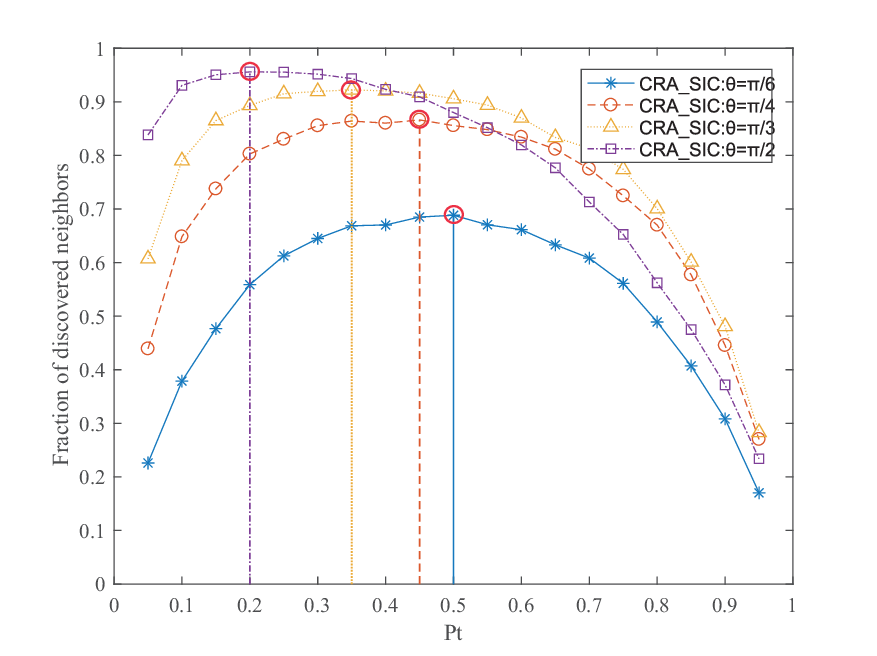}
\label{8b}
}
\hspace{-10mm}
\caption{The influence of the beam width $\theta$ and the transmit probability ${P_t}$ on the efficiency of SBA-SIC and CRA-SIC ($N=300$, $T=400$).}
\label{8}
\end{figure}

        Fig. \ref{9} illustrates the impact of $\theta $ on the efficiency of SBA-SIC and CRA-SIC when $T$ is 200.
        $P_t$ is 0.1 and 0.2 to guarantee a better performance of SBA-SIC and CRA-SIC, respectively.
        The number of nodes ranges from 50 to 500,
        with 50 as the interval.
        When there are few nodes, the ND time decreases with the increase of $\theta$.
        The reason is that it is a waste of time to scan the beam without neighbors.
        As the number of nodes increases, the speed of ND algorithms with a large $\theta $ decreases fast.
        When the number of nodes is large, there are too many neighbors and frequent collisions in a beam,
        which results in the decrease of ND time.
        Meanwhile, by comparing Fig. \ref{9a} and Fig. \ref{9b},
        it is found that $\theta $ has a larger influence on CRA-SIC than that on SBA-SIC.

\begin{figure}[!t]
\centering
\hspace{-10mm}
\subfigure[SBA-SIC with ${P_t} = 0.1$]{
\includegraphics[width=5cm]{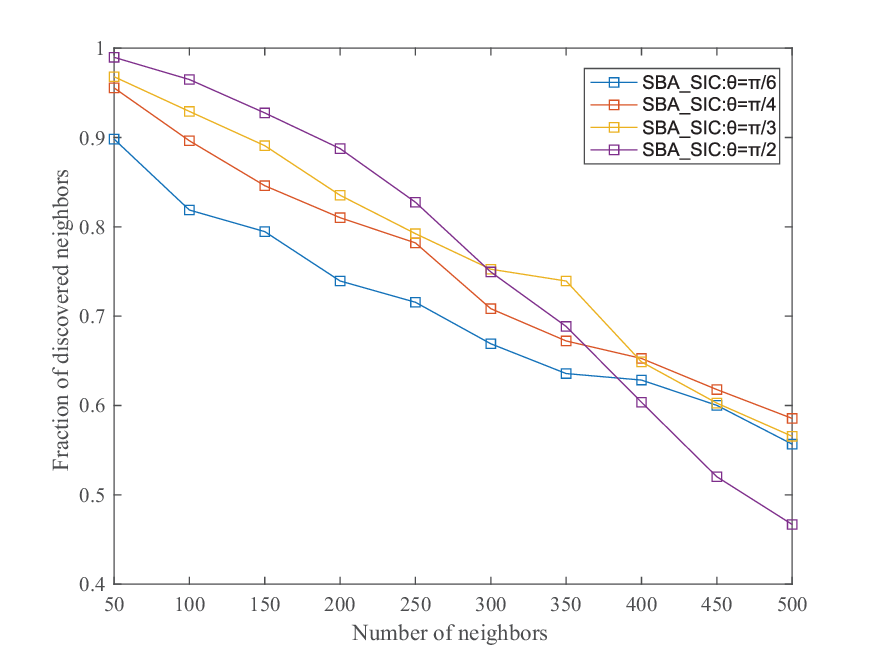}
\label{9a}
}
\hspace{-9mm}
\subfigure[CRA-SIC with ${P_t} = 0.2$]{
\includegraphics[width=5cm]{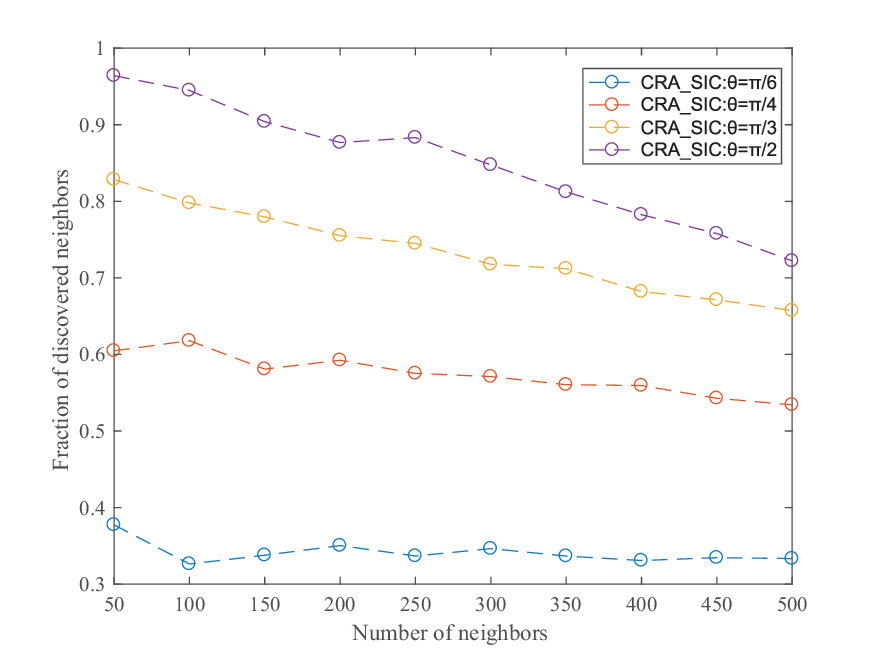}
\label{9b}
}
\hspace{-10mm}
\caption{The influence of the beam width $\theta$ on the efficiency of SBA-SIC and CRA-SIC ($T=200$).}
\label{9}
\end{figure}

        With obtaining the optimal parameters,
        we set the transmission probability of the ND algorithms based on SBA to 0.1
        and the beam width to $\frac{\pi }{3}$.
        For the ND algorithms based on CRA, we set the transmission probability to 0.2 and the beam width to $\frac{\pi }{2}$.
        Then, the ND algorithms with and without perfect SIC are compared and the results are shown in Fig. \ref{11}.
        The performance gap between CRA and SBA is not as obvious as the gap shown in Fig. \ref{7}.
        When introducing perfect SIC, the performance of both CRA and SBA has been improved,
        and the performance of SBA has been improved more obviously.
        Moreover, the performance improvement of the ND algorithms based on SIC is more significant in a network
        with denser nodes such as $N=500$ than that with $N=300$,
        since when $N=300$, there are fewer data packet collisions.
        However, there is an upper bound of the number of signals that can be unpacked by
        SIC simultaneously.
        When the density of nodes in the network exceeds a threshold,
        the role of SIC will be weakened.

\begin{figure}[!t]
\centering
\includegraphics[width=6.5cm]{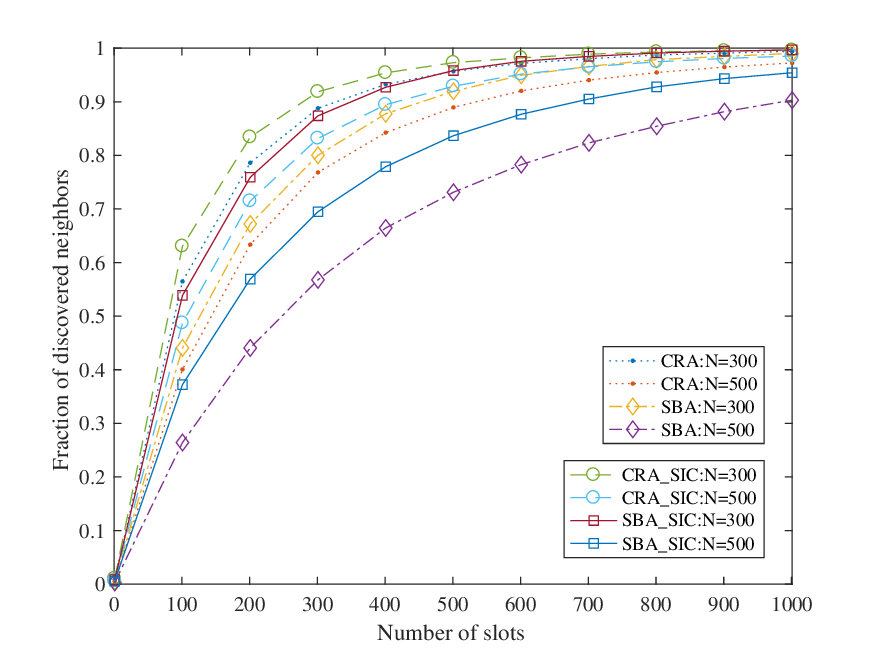}
\caption{Comparison of four ND algorithms with and without SIC using optimal parameters}
\label{11}
\end{figure}

        To study the influence of imperfect SIC on the performance of ND algorithms,
        the effect of imperfect SIC with different residual coefficient of interference cancellation $\xi$ on the performance improvement of ND algorithms is simulated when $N=800$, as shown in Fig. \ref{10}.
        It is revealed that the performance of CRA-SIC and SBA-SIC with imperfect SIC will be improved compared with CRA and SBA, respectively.
        When $\xi  < {10^{ - 2.5}}$, the imperfect SIC can be approximately regarded as the perfect SIC.
        Moreover, when ${10^{ - 2.5}} \le \xi  < {10^{ - 1}}$, the performance improvement of ND algorithms degrades rapidly.
        Tradeoffs can be made based on performance and implementation costs.
        When $\xi  \ge {10^{ - 1}}$, the performance improvement is not much different from the performance improvement when $\xi = 1$.
        The reason is that the power difference between the received signal with the highest power and other signals is relatively large.
        If $\xi$ is not small enough, the residual power of the received signal is too large,
        which affects the unpacking of other received signals.
        If the SIC applied by the ND algorithms cannot guarantee $\xi  < {10^{ - 1}}$,
        the requirements of SIC can be relaxed appropriately to improve the performance of ND algorithms at a small cost.

\begin{figure}[!t]
\centering
\includegraphics[width=6.5cm]{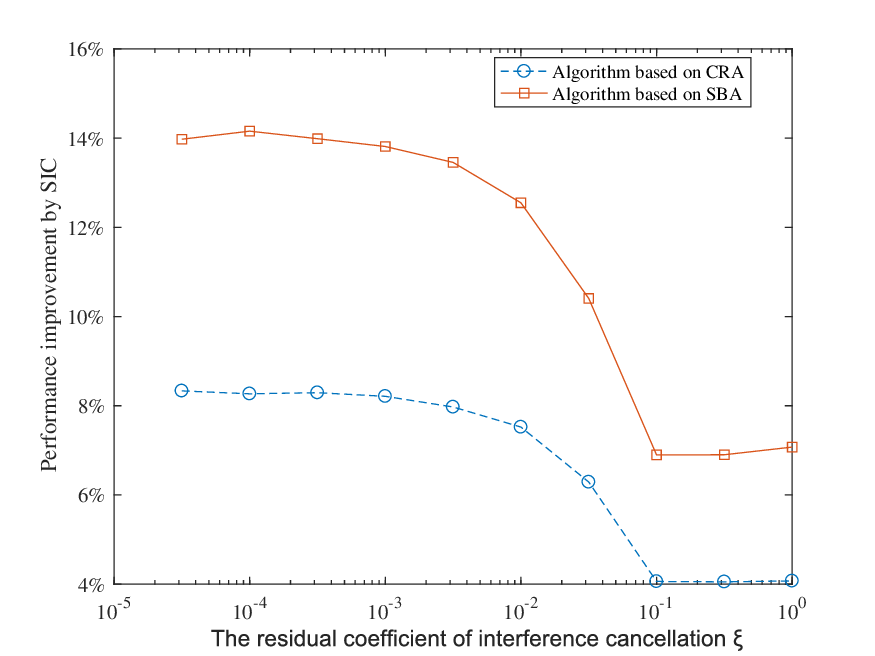}
\caption{The influence of the residual coefficient of interference cancellation $\xi$ on the efficiency of SBA-SIC and CRA-SIC ($N=800$).}
\label{10}
\end{figure}

    \subsection{ND algorithms with and without MPR}
    The four ND algorithms of SBA-SIC, CRA-SIC, SBA-SIC-MPR,
    and CRA-SIC-MPR with optimal parameters are compared in Fig. \ref{13}.
    The perfect SIC is applied and the number of modulation methods for MPR is set to 2.
    The introduction of MPR has improved the performance of both CRA and SBA.
    Besides, the performance improvement is more obvious in the networks with more nodes.
    This is due to the fact that the benefits of MPR are not obvious when there are few nodes.
    As the number of nodes and the amount of concurrent data increases,
    the benefits of MPR will be more significant.

\begin{figure}[!t]
\centering
\includegraphics[width=6.5cm]{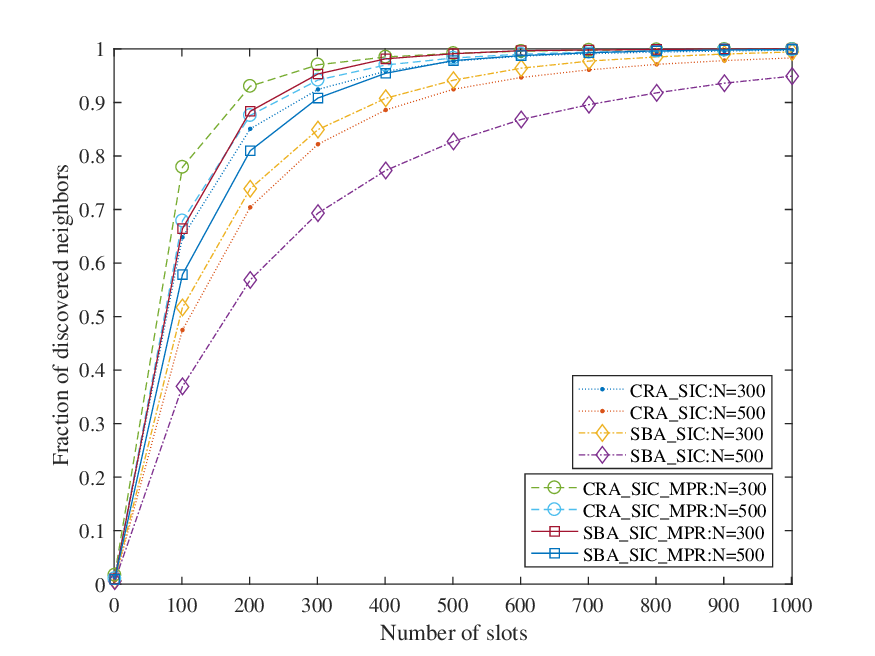}
\caption{Comparison of four ND algorithms with and without MPR}
\label{13}
\end{figure}

        The effect of different numbers of modulation methods $h$ on the performance improvement
        of the ND algorithms based on MPR under the optimal parameters with node size $N=300$ is simulated,
        as shown in Fig. \ref{14}.
        It is revealed that the introduction of MPR has a large improvement on the algorithm performance,
        and the improvement is more obvious with a larger number of modulation methods.

\begin{figure}[!t]
\centering
\hspace{-10mm}
\subfigure[SBA-SIC-MPR]{
\includegraphics[width=5cm]{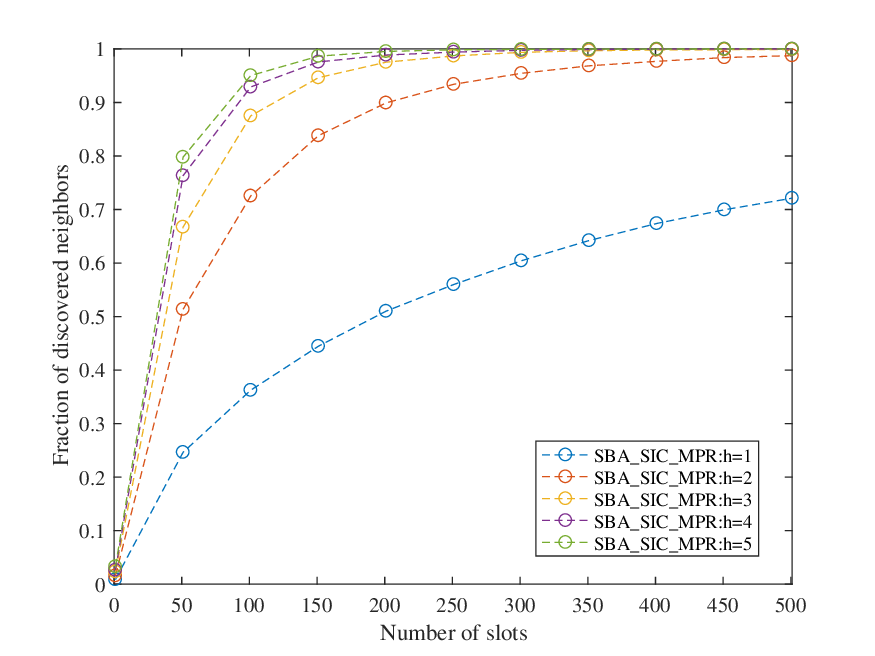}
\label{14a}
}
\hspace{-9mm}
\subfigure[CRA-SIC-MPR]{
\includegraphics[width=5cm]{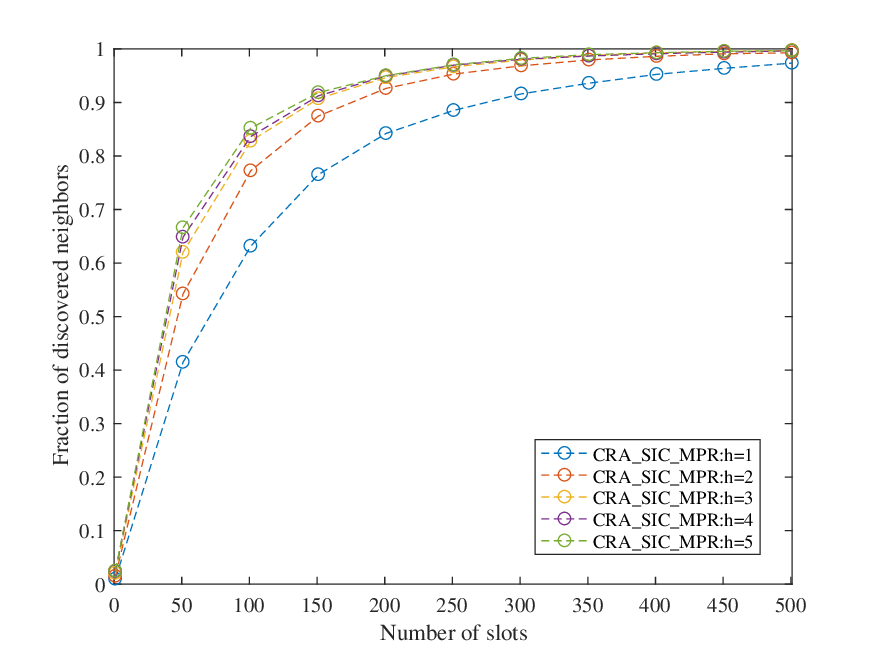}
\label{14b}
}
\hspace{-10mm}
\caption{The influence of the numbers of modulation methods $h$ on the efficiency of SBA-SIC-MPR and CRA-SIC-MPR ($N=300$)}
\label{14}
\end{figure}

    \subsection{Performance improvement}
    The proposed ND algorithms are designed to reduce the ND time.
    To intuitively compare the ND speed of the above six algorithms,
    Fig. \ref{a3} simulates the number of time slots required for these algorithms to discover 95\% of neighbors with different numbers of nodes.
    It is revealed from the figure that for both CRA and SBA,
    as the number of nodes in the network increases linearly,
    the ND time increases almost exponentially.
    When the number of nodes is large, it takes a long time to discover 95\% of neighbors.
    The analysis of Fig. \ref{a3} shows that the application of SIC reduces the ND time by 5\%-51\%,
    among which the ND time is shortened by 27.92\% and 26.88\% on average compared to SBA and CRA.
    In addition, on this basis, the application of MPR ($h=2$) further shortens the ND time to 23\%-93\%.
    Compared with the SBA and CRA, the average ND time is reduced by 69.02\% and 66.03\%.
    The above results are also consistent with the theoretical analysis in Fig. \ref{a2},
    which once again prove the correctness of the mathematical analysis.

\begin{figure}[!t]
\centering
\includegraphics[width=0.5\textwidth]{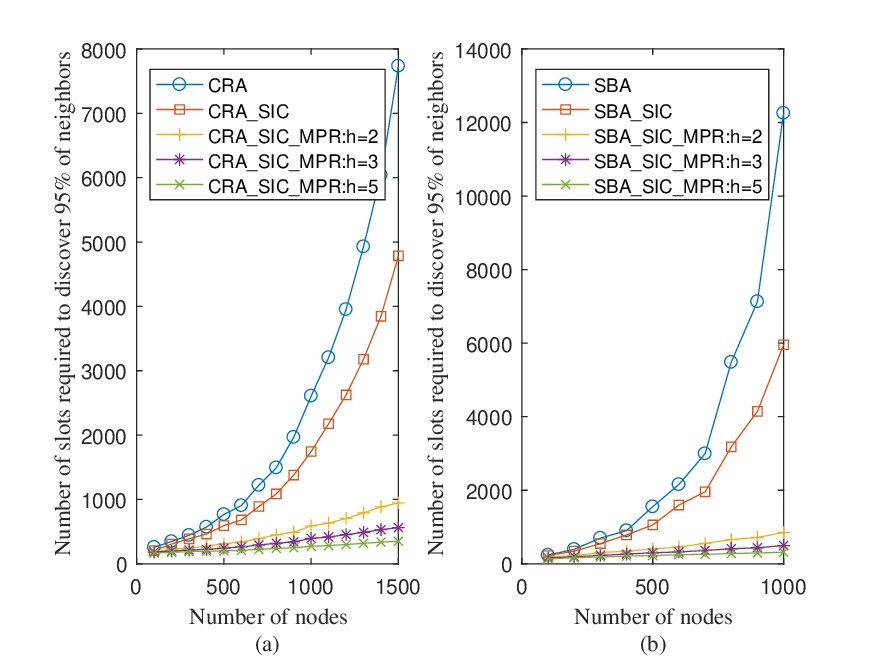}
\caption{Number of slots required to discover 95\% of neighbors for varying number of nodes}
\label{a3}
\end{figure}

\section{Conclusion}
In the IoT scenario, there is an urgent need to improve ND efficiency for the fast networking of massive nodes.
In this paper, SIC and MPR are introduced to shorten the ND time by solving the packet collision problem.
The time expectation of the ND algorithms is theoretically derived,
which verifies that the ND algorithms proposed in this paper achieve the expected ND in a lower time compared with traditional CRA and SBA.
The simulation shows that SIC declines the ND time of SBA by an average of 27.92\%,
while the ND time of CRA is reduced by an average of 26.88\% and the application of MPR and SIC declines the ND time of SBA by 69.02\% on average,
and CRA is reduced by 66.03\%.
In summary, both theoretical analysis and simulation results prove that the proposed scheme is more adaptable to the networking requirements in IoT scenarios.
This paper only focuses on the neighbor discovery and does not involve routing protocols.
Therefore, nodes' failures due to lack of energy are ignored.
In the future work, this aspect can be improved.

\section*{Appendix}
    \subsection{Proof of Lemma 1}
    The nodes are uniformly distributed in a rectangle with area $a*b$,
    which can be divided into two cases, as shown in Fig. \ref{5}.

\begin{figure}[!t]
\centering
\includegraphics[width=0.33\textwidth]{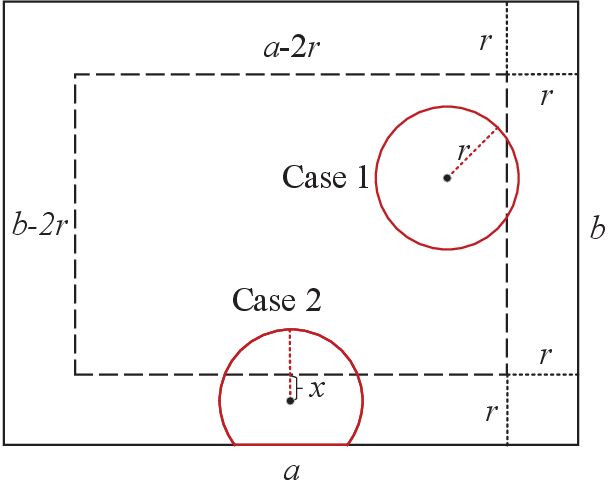}
\caption{Nodes distribution diagram}
\label{5}
\end{figure}

    1) When nodes are located in the small rectangle area of $\left( {a - 2r} \right)*\left( {b - 2r} \right)$,
    the average number of neighbors of nodes is
    \begin{equation}
    \label{eq_a_34}
    {N_{in}} = \lambda \pi {r^2},
    \end{equation}
    with probability
    \begin{equation}
    {P_{in}} = \frac{{\left( {a - 2r} \right)\left( {b - 2r} \right)}}{{ab}}.
    \end{equation}

    2) When nodes are located outside the small rectangle and the distance from it is $x$,
    the average number of neighbors of nodes is
    \begin{equation}
    {N_{out}} = \lambda \pi {r^2} - \lambda {r^2}\arccos \frac{{r - x}}{r} + \lambda \left( {r - x} \right)\sqrt {{r^2} - {{\left( {r - x} \right)}^2}} ,
    \end{equation}
    with probability
    \begin{equation}
    {P_{out}} = \frac{{2a + 2b - 8r + 8x}}{{ab}}.
    \end{equation}

    For the whole network, the average number of neighbors of a node is
    \begin{equation}
    \label{eq_a_38}
    \bar N = {N_{in}}{P_{in}} + \int_0^r {{N_{out}}{P_{out}}dx}.
    \end{equation}

    Combining equations(\ref{eq_a_34})-(\ref{eq_a_38}),
    we have equation (\ref{eq_L1}).
    \subsection{Proof of Lemma 2}
    Suppose that the number of collided data packets at the receiver is $M$,
    and the power of these $M$ data packets from strong to weak is ${S_1},\;{S_2},\; \ldots ,\;{S_M}$.
    Assume that the SIR threshold beyond which data packets can be unpacked is $\beta $.
    The conditions for SIC to successfully unpack all $M$ collision data packets are
        \begin{equation}
        \label{eq_12}
        \left\{ \begin{array}{l}
        SI{R_1} = \frac{{{S_1}}}{{{S_2} + {S_3} +  \ldots  + {S_M}}} \ge \beta ,\\
        SI{R_2} = \frac{{{S_2}}}{{{S_3} + {S_4} +  \ldots  + {S_M}}} \ge \beta ,\\
         \ldots \\
        SI{R_{M - 1}} = \frac{{{S_{M - 1}}}}{{{S_M}}} \ge \beta .
        \end{array} \right.
        \end{equation}
        In the free space transmission model,
        the relation between the power of received signal, i.e. $S$,
        and the distance between the receiving node and the transmitting node, i.e. $d$, is
        \begin{equation}
        \label{eq_13}
        S = {\left( {\frac{{{\lambda _0}}}{{4\pi d}}} \right)^2}{P_T}{G_T}{G_R}\;\;\;\;\frac{{{\lambda _0}}}{{4\pi }} \le d \le r,
        \end{equation}
        where ${P_T}$ is transmit power of nodes,
        ${G_T}$ is transmission gain of nodes and ${G_R}$ is reception gain of nodes.

        Substituting $d < \frac{{{\lambda _0}}}{{4\pi }}$ into (\ref{eq_13}),
        we have $S > {P_T}{G_T}{G_R}$, which is obviously not in line with the actual situation,
        because (\ref{eq_13}) only applies to the far-field,
        and $d < \frac{{{\lambda _0}}}{{4\pi }}$ is the condition of the near-field.
        The wireless ad hoc network generally communicates in the 2.4 GHz ISM frequency band,
        therefore $\frac{{{\lambda _0}}}{{4\pi }} \approx \frac{{12.5\;\rm{cm}}}{{4\pi }} \approx 1\;\rm{cm}$.
        Considering the distance between the two nodes is greater than 1 cm,
        we have (\ref{eq_14}) by substituting (\ref{eq_13}) into (\ref{eq_12}).
        \begin{equation}
        \label{eq_14}
        \left\{ {\begin{array}{*{20}{c}}
        {SI{R_1} = \frac{1}{{\sum\limits_{i = 0}^{M - 2} {{{\left( {\frac{{{d_1}}}{{{d_{M - i}}}}} \right)}^2}} }} \ge \beta ,}\\
        {SI{R_2} = \frac{1}{{\sum\limits_{i = 0}^{M - 3} {{{\left( {\frac{{{d_2}}}{{{d_{M - i}}}}} \right)}^2}} }} \ge \beta ,}\\
         \ldots \\
        {SI{R_{M - 1}} = \frac{{d_M^2}}{{d_{M - 1}^2}} \ge \beta ,}\\
        {\frac{{{\lambda _0}}}{{4\pi }} \le {d_1},\;{d_2},\; \ldots ,\;{d_M} \le r.}
        \end{array}} \right.
        \end{equation}

        According to (\ref{eq_14}), due to the limitations of the communication range and the unpacking threshold,
        the number of data packets that can be unpacked by SIC simultaneously is limited.
        To analyze the maximum number of data packets ${n_0}$ that can be unpacked by SIC at the same time,
        substituting ${d_M} = r$ and $SIR = \beta$ into (\ref{eq_14}) yields
        \begin{equation}
        \label{eq_15}
        \frac{{{\lambda _0}}}{{4\pi }} \le {d_{M - n}} = \frac{r}{{\sqrt {\beta {\rm{\cdot}}{{\left( {1{\rm{ + }}\beta } \right)}^{n - 1}}} }}\;\;\;\;n \ge 1.
        \end{equation}
        Simplifying (\ref{eq_15}), we have
        \begin{equation}
        1 \le n \le \left\lfloor {{\rm{1}} + {{\log }_{1 + \beta }}\frac{{16{\pi ^2}{r^2}}}{{\lambda _0^2\beta }}} \right\rfloor .
        \end{equation}
    \subsection{Proof of Lemma 3}
        According  to (\ref{eq_14}), if ${S_{M - 1}}$ can be unpacked correctly, ${d_{M - 1}}$ needs to satisfy
        \begin{equation}
        {d_{M - 1}} \le \frac{1}{{\sqrt \beta  }}{d_M}
        \end{equation}
        with probability
        \begin{equation}
        {P_{M - 1}} = \frac{{\frac{\theta }{2}{{\left( {\frac{{{d_M}}}{{\sqrt \beta  }}} \right)}^2}}}{{\frac{\theta }{2}{r^2}}} = \frac{{d_M^2}}{{\beta {r^2}}}.
        \end{equation}

        Similarly, the probability that ${S_{M - n}}$ can be unpacked correctly is
        \begin{equation}
        \label{eq_a_46}
        {P_{M - n}} = \frac{1}{{\beta {r^2}\sum\limits_{i = 0}^{n - 1} {{{\left( {\frac{1}{{{d_{M - i}}}}} \right)}^2}} }}\;\;\;\;\;1 \le n \le {n_0} - 1.
        \end{equation}

        In particular, when the current $M-1$ packets are all unpacked except of ${S_M}$,
        since the perfect SIC ignores the interference cancellation residual and noise,
        ${S_M}$ can definitely be unpacked successfully, i.e. ${P_M} = 1$.

        Among the $M$ collision packets ($M \le {n_0}$),
        the probability that the first $Q$ packets can be successfully unpacked is
        \begin{equation}
        P(Q,M) = \prod\limits_{i = 1}^Q {{P_i}} .
        \end{equation}

        Then, the expected probability that one of the $M$ collision packets ($M \le {n_0}$)
        can be successfully unpacked is
        \begin{equation}
        \label{eq_a_48}
        \bar P(M) = \frac{1}{M}\sum\limits_{i = 1}^M {iP({\rm{i,}}M)}  = \frac{1}{M}\sum\limits_{i = 1}^M {\prod\limits_{j = 1}^i {i{P_j}} } .
        \end{equation}

        Combining equations(\ref{eq_a_46}) and (\ref{eq_a_48}),
        we have equation (\ref{eq_L3}).
    \subsection{Proof of Lemma 4}
        In Lemma \ref{lemma_2}, the perfect SIC is considered that can completely
        eliminate the interference caused by the unpacked signal.
        However, in practice there will be interference brought by residual components and noise.
        The power of interference is
        \begin{equation}
        \label{eq_34}
        {C_i} = \xi {S_i} + {N_i},
        \end{equation}
        where $\xi$ is the residual coefficient of interference cancellation and ${N_i}$ is
        the power of additive white Gaussian noise.
        In the case of imperfect SIC,
        the conditions that all $M$ collision data packets can be successfully unpacked are
        \begin{equation}
        \label{eq_35}
        \left\{ \begin{array}{l}
        SIN{R_1} = \frac{{{S_1}}}{{{S_2} + {S_3} + {S_4} +  \ldots  + {S_M} + {N_0}}} \ge \beta ,\\
        SIN{R_2} = \frac{{{S_2}}}{{{C_1} + {S_3} + {S_4} +  \ldots  + {S_M} + {N_0}}} \ge \beta ,\\
        SIN{R_3} = \frac{{{S_3}}}{{{C_1} + {C_2} + {S_4} +  \ldots  + {S_M} + {N_0}}} \ge \beta ,\\
         \ldots \\
        SIN{R_{M - 1}} = \frac{{{S_{M - 1}}}}{{{C_1} + {C_2} +  \ldots  + {C_{M - 2}} + {S_M} + {N_0}}} \ge \beta ,\\
        SIN{R_M} = \frac{{{S_M}}}{{{C_1} + {C_2} +  \ldots  + {C_{M - 2}} + {C_{M - 1}} + {N_0}}} \ge \beta ,
        \end{array} \right.
        \end{equation}
        where $N_0$ is the power of additive white Gaussian noise in the environment.

        Substituting (\ref{eq_34}) into (\ref{eq_35}), we have
        \begin{equation}
        \label{eq_36}
        \left\{ \begin{array}{l}
        SIN{R_1} = \frac{{{S_1}}}{{{S_2} + {S_3} + {S_4} +  \ldots  + {S_M} + N}} \ge \beta ,\\
        SIN{R_2} = \frac{{{S_2}}}{{\xi {S_1} + {S_3} + {S_4} +  \ldots  + {S_M} + N}} \ge \beta ,\\
        SIN{R_3} = \frac{{{S_3}}}{{\xi {S_1} + \xi {S_2} + {S_4} +  \ldots  + {S_M} + N}} \ge \beta ,\\
         \ldots \\
        SIN{R_{M - 1}} = \frac{{{S_{M - 1}}}}{{\xi {S_1} + \xi {S_2} +  \ldots  + \xi {S_{M - 2}} + {S_M} + N}} \ge \beta ,\\
        SIN{R_M} = \frac{{{S_M}}}{{\xi {S_1} + \xi {S_2} +  \ldots  + \xi {S_{M - 2}} + \xi {S_{M - 1}} + N}} \ge \beta ,
        \end{array} \right.
        \end{equation}
        where $N$ is the power sum of the noise in the environment and the noise caused by imperfect cancellation.

        Therefore, although the closed-form solution of the maximum number can't be obtained,
        it is upper bounded by $\left\lfloor {2 + {{\log }_{1 + \beta }}\frac{{16{\pi ^2}{r^2}}}{{\lambda _0^2\beta }}} \right\rfloor $ in the case of perfect SIC, and is lower bounded by 1 in the algorithm without SIC.

\begin{IEEEbiography}[{\includegraphics[width=1.1in,height=1.25in,clip,keepaspectratio]{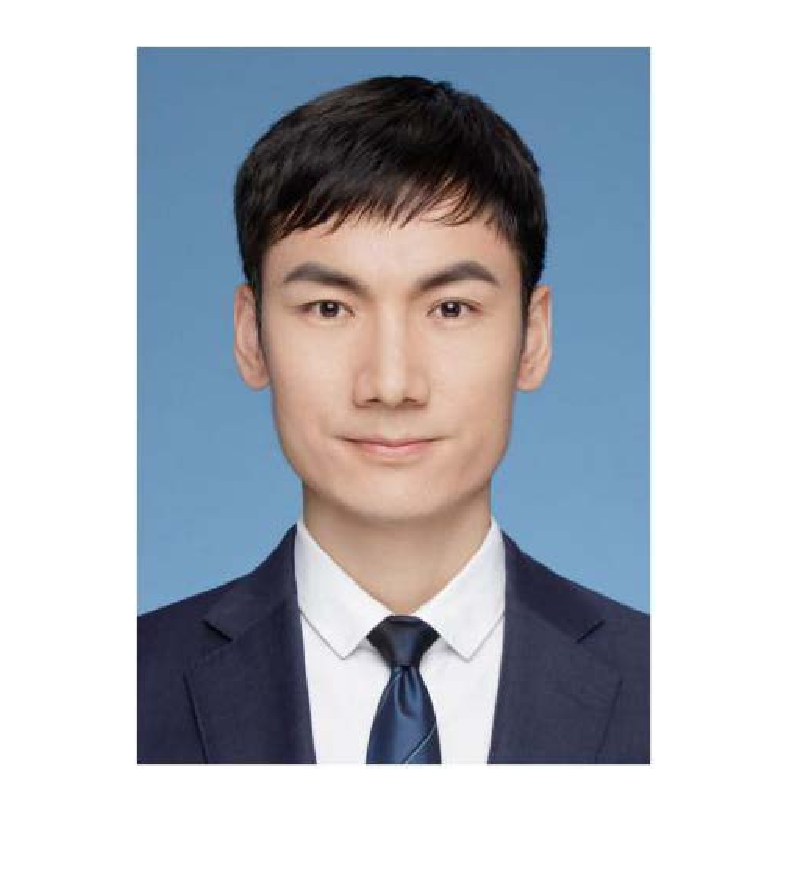}}]
	{Zhiqing Wei}
received his B.E. and Ph.D. degrees from the Beijing University of Posts and Telecommunications (BUPT) in 2010 and 2015. Now he is an associate professor at BUPT.
He was granted the Exemplary Reviewer of IEEE Wireless Communications Letters in 2017, the Best Paper Award of WCSP 2018.
His research interest is the performance analysis and optimization of mobile ad hoc networks.
\end{IEEEbiography}

\begin{IEEEbiography}[{\includegraphics[width=1.1in,height=1.25in,clip,keepaspectratio]{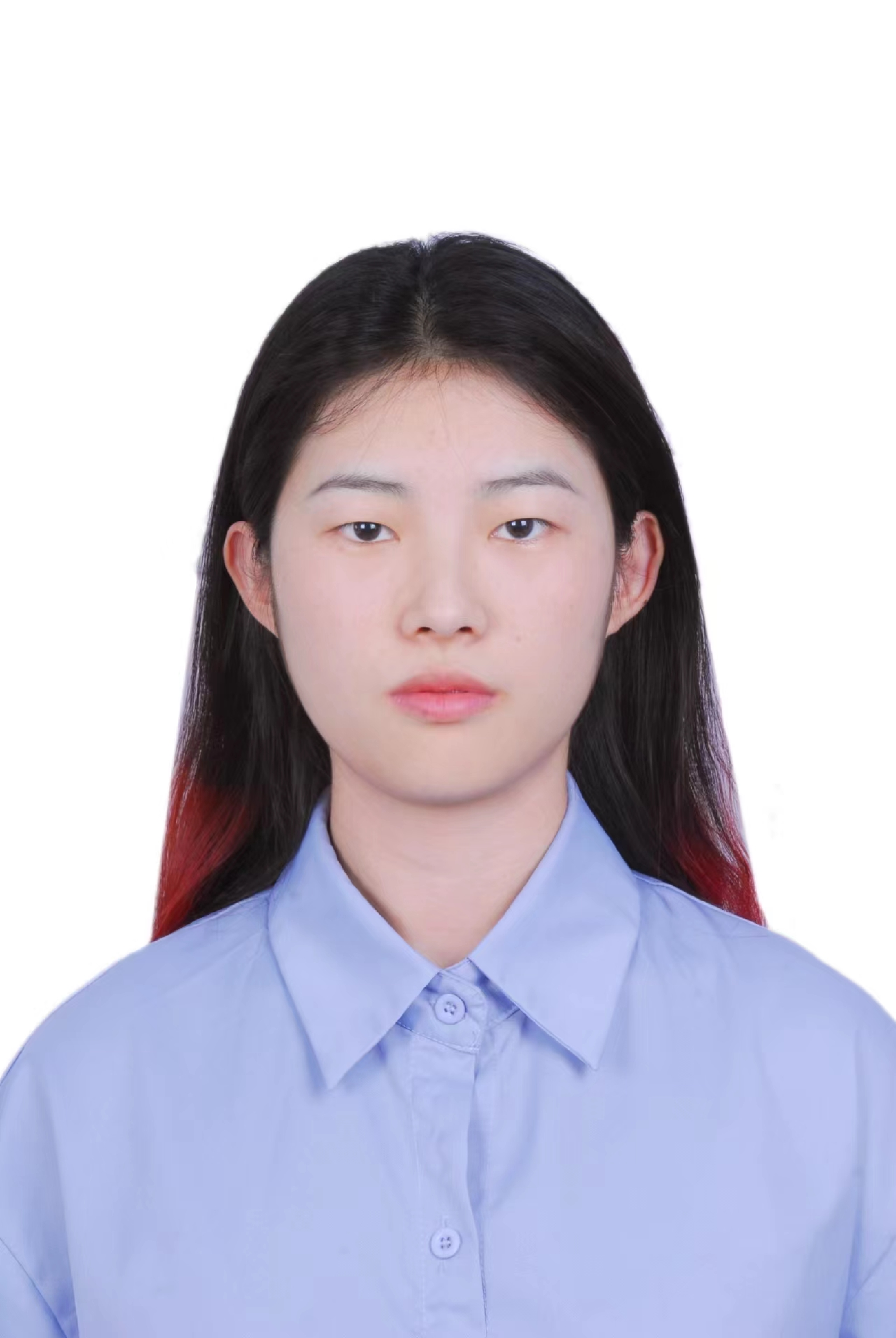}}]
	{Yueyue Liang}
	received the B.S. and M.S. degrees from the Beijing University of Posts and Telecommunications (BUPT),
in 2020 and 2023, respectively. Her research interests include neighbor discovery and relative positioning of wireless ad hoc networks.
\end{IEEEbiography}

\begin{IEEEbiography}[{\includegraphics[width=1.1in,height=1.25in,clip,keepaspectratio]{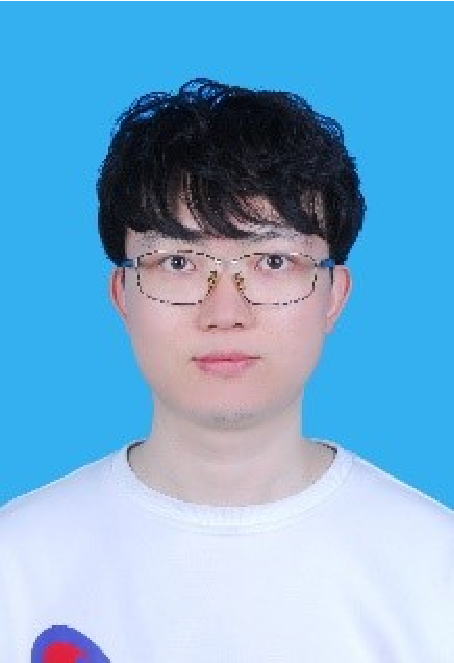}}]
	{Zeyang Meng}
	(Graduate Student Member, IEEE)) received the B.E. degree from Beijing University of Posts and Telecommunications (BUPT), Beijing, China, in 2020. He is working on the Ph.D. degree in BUPT. His research interest is the performance analysis of intelligent machine networks.
\end{IEEEbiography}

\begin{IEEEbiography}[{\includegraphics[width=1.1in,height=1.25in,clip,keepaspectratio]{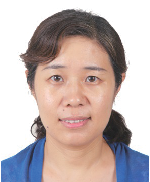}}]{Zhiyong Feng}
is a senior member of IEEE and
a full professor. She is the director of the Key
Laboratory of Universal Wireless Communications,
Ministry of Education. She holds B.S., M.S., and
Ph.D. degrees in Information and Communication
Engineering from Beijing University of Posts and
Telecommunications (BUPT), Beijing, China. She
is a technical advisor of NGMN, the editor of IET
Communications, and KSII Transactions on Internet
and Information Systems, the reviewer of IEEE
TWC, IEEE TVT, and IEEE JSAC. She is active in
ITU-R, IEEE, ETSI and CCSA standards. Her main research interests include
wireless network architecture design and radio resource management in 5th
generation mobile networks (5G), spectrum sensing and dynamic spectrum
management in cognitive wireless networks, universal signal detection and
identification, and network information theory.
\end{IEEEbiography}

\begin{IEEEbiography}[{\includegraphics[width=1.1in,height=1.25in,clip,keepaspectratio]{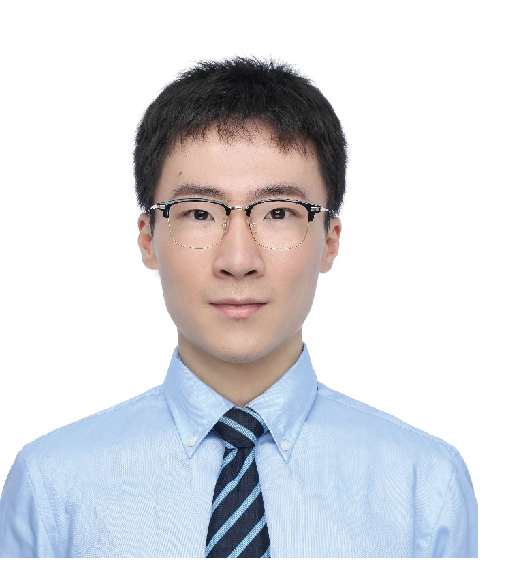}}]{Kaifeng Han} (Member, IEEE) is a senior engineer in the China Academy of Information and Communications Technology (CAICT). Before that, he obtained his Ph.D. degree from the University of Hong Kong in 2019, and the B.Eng. (first-class hons.) from the Beijing University of Posts and Telecommunications and Queen Mary University of London in 2015, all in electrical engineering. His research interests focus on integrated sensing and communications, wireless AI for 6G. He is funded by China Association for Science and Technology (CAST) Young Elite Scientists Sponsorship Program. He received one best paper award and published 40+ research papers in international conferences and journals.
\end{IEEEbiography}

\begin{IEEEbiography}[{\includegraphics[width=1.1in,height=1.25in,clip,keepaspectratio]{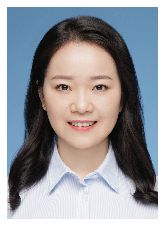}}]{Huici Wu} (Member, IEEE) received the Ph.D degree from Beijing University of Posts and Telecommunications (BUPT), Beijing, China, in 2018. From 2016 to 2017, she visited the Broadband Communications Research (BBCR) Group, University of Waterloo, Waterloo, ON, Canada. She is now an Associate Professor at BUPT. Her research interests are in the area of wireless communications and networks, with current emphasis on collaborative air-to-ground communication and wireless access security.
\end{IEEEbiography}

\ifCLASSOPTIONcaptionsoff
\newpage
\fi

\end{document}